\documentclass[11pt]{amsart}
\usepackage{graphicx, amssymb}

\addtolength{\hoffset}{-1.5cm} \addtolength{\textheight}{3cm}
\addtolength{\voffset}{-1.5cm} \addtolength{\textwidth}{3cm}

\vfuzz2pt 
\hfuzz2pt 
\newtheorem{thm}{Theorem}[section]

\newtheorem{lem}[thm]{Lemma}
\newtheorem{prop}[thm]{Proposition}
\theoremstyle{definition}
\newtheorem{defn}[thm]{Definition}
\theoremstyle{remark}
\newtheorem{rem}[thm]{Remark}
\newtheorem{ex}[thm]{Example}

\newcommand{\abs}[1]{\left\vert#1\right\vert}

\newcommand{\Real}{\mathbb R}
\newcommand{\Natural}{\mathbb N}

\newcommand{\such}{\, | \, }
\newcommand{\prob}{\mathbb{P}}
\newcommand{\Exp}{\mathcal E}
\newcommand{\qprob}{\mathbb{Q}}
\newcommand{\expec}{\mathbb{E}}

\newcommand{\probtriple}{(\Omega, \mathcal{F}, \prob)}

\newcommand{\filtration}{\mathbf{F} = \pare{\mathcal{F}_t}_{t \in \Real_+}}
\newcommand{\F}{\mathcal{F}}
\newcommand{\G}{\mathcal{G}}

\newcommand{\ud}{\mathrm d}
\newcommand{\inner}[2]{\left \langle #1 , #2 \right \rangle}
\newcommand{\p}{\mathrm{p}}
\newcommand{\Leb}{\mathsf{Leb}}

\newcommand{\e}{\mathrm{e}}

\newcommand{\g}{\mathfrak{g}}
\newcommand{\rel}{\mathfrak{rel}}
\newcommand{\pare}[1]{\left(#1\right)}
\newcommand{\bra}[1]{\left[#1\right]}
\newcommand{\dbra}[1]{[\kern-0.15em[ #1 ]\kern-0.15em]}
\newcommand{\dbraco}[1]{[\kern-0.15em[ #1 [\kern-0.15em[}
\newcommand{\dbraoo}[1]{]\kern-0.15em] #1 [\kern-0.15em[}
\newcommand{\C}{\mathfrak{C}}
\newcommand{\K}{\mathfrak{K}}

\newcommand{\fone}{\mathbf{1}}
\newcommand{\bF}{\mathbf{F}}
\newcommand{\bG}{\mathbf{G}}
\newcommand{\indic}{\mathbb{I}}

\newcommand{\hx}{x \indic_{\{ |x| \leq 1 \}}}
\newcommand{\hbarx}{x \indic_{\{ |x| > 1 \}}}
\begin{document}

\title[Balance, growth and diversity of financial markets]{Balance, growth and diversity of financial markets}%
\author{Constantinos Kardaras}%
\address{Constantinos Kardaras, Mathematics and Statistics Department, Boston University, 111 Cummington Street, Boston, MA 02215, USA.}%
\email{kardaras@bu.edu}%
\thanks{JEL code classification: G14}%
\keywords{Diversity $\cdot$ Equivalent Martingale
Measure $\cdot$ Arbitrage $\cdot$ Efficiency}%

\date{\today}%

\begin{abstract}
A financial market comprising of a
certain number of distinct companies is considered, and the
following statement is proved: either a specific agent will surely
beat the whole market unconditionally in the long run, or (and this
``or'' is not exclusive) all the capital of the market will
accumulate in one company. Thus, absence of any ``free unbounded
lunches relative to the total capital'' opportunities lead to the
most dramatic failure of diversity in the market: one company takes
over all other until the end of time. In order to prove this, we
introduce the notion of \textsl{perfectly balanced} markets, which
is an equilibrium state in which the relative capitalization of each
company is a martingale under the physical probability. Then, the
weaker notion of \textsl{balanced} markets is discussed where the
martingale property of the relative capitalizations holds only
\emph{approximately}, we show how these concepts relate to
growth-optimality and efficiency of the market, as well as how we
can infer a \emph{shadow} interest rate that is implied in the
economy in the absence of a bank.
\end{abstract}

\maketitle

\setcounter{section}{-1}

\section{Introduction}

\subsection{Discussion and results}

We consider a model of a financial market that consists of $d$
stocks of certain ``distinct'' companies. The distinction between
companies clings on their having different risk and/or growth
characteristics, and will find its mathematically precise definition
later on in the text.

In absence of clairvoyance, the \emph{total} capital of each company
is modeled as a stochastic process $S^i$, $i=1,\ldots, d$.
Randomness comes through a set $\Omega$ of possible \emph{outcomes}
--- for each $\omega \in \Omega$ we have different realizations of
$S^i(\omega)$. Financial agents decide to invest certain amounts of
their wealth to different stocks, and via their actions the value of
$S^i_t$ for each time $t \in \Real_+$ is determined.

Of major importance in our discussion will be the \emph{distribution
of market capital}, given by the \textsl{relative capitalization}
$\kappa^i := S^i / (S^1 + \ldots + S^d)$ of each company ($S^1 +
\ldots + S^d$ is the total market capital). In particular, the
limiting, i.e., long-run, capital distribution will be investigated.
For addressing this question, a probability $\prob$ is introduced
that weights the different outcomes of $\Omega$ (for all events in
some $\sigma$-algebra $\F$); $\prob$ reflects the \emph{average}
subjective feeling of the financial agents, but in this average
sense it is not subjective anymore: each agent's investment
decisions are fed back into the relative capitalization of the
companies, and thus affects the random choice of the outcome. Via
this mechanism, $\prob$ becomes a \emph{real-world} probability, and
can also be regarded as the subjective view of a
\emph{representative agent} in the market, whose decisions alone
reflect the cumulative decisions of all ``small'' agents.

The time-flow of information is modeled via a \textsl{filtration}
$\filtration$. Each $\sigma$-algebra $\F_t$ is supposed to include
all (economical, political, etc.) information gathered up to time
$t$ and is increasing in time: $\F_s \subseteq \F_t$ for $0 \leq s <
t < \infty$. A ``representative agent'' information structure cannot
be justified, since different agents might have \emph{very}
different ability or capability to access information. This
difficulty can be circumvented by choosing $\bF$ in a minimal way,
i.e., by assuming that it is exactly the information contained in
the company capitalizations --- it is reasonable to assume that
\emph{every} agent has at least access to this information. This
minimal information structure will turn out to be the most useful in
our discussion (exactly because of its minimality property).

\smallskip

An important question from a modeling point of view is: \emph{how
does one go about choosing $\prob$ in a reasonable way in order to
reflect the way financial agents act?} From the economical side, the
concept of \emph{efficiency} has been quite extensively discussed in
the literature. In his famous work \cite{Fama: eff}, Fama states
that \emph{a market in which prices `fully reflect' available
information is called `efficient'}. Thus, efficiency is a property
that the capitalization processes $S$ must have under the pair
$(\prob, \bF)$, but it is questionable whether it opens the door to
mathematically pin down what are the possible ``reasonable''
probabilities $\prob$.

In the field of Mathematical Finance it has been argued that a
minimal condition for efficiency is absence of ``free lunch''
possibilities for agents; for if a free lunch existed, a sudden
change in the capital distribution would occur to correct for it,
which would contradict the requirement that prices fully reflect
information. The notion of ``no free lunch'' found its mathematical
incarnation in the existence of a probability $\qprob$ that is
equivalent to $\prob$ (meaning that $\prob$ and $\qprob$ have the
same impossibility events) under which capitalization processes
\emph{suitably deflated} have some kind of martingale property under
$(\qprob, \bF)$. However, as already mentioned this is only a
\emph{minimal} condition for efficiency. Indeed, consider a
two-stock market in which deflated capitalization processes are
modeled by $S_t^1 = \exp (W^1_t)$ and $S_t^2 = \exp (100t + W^2_t)$
where $t \in [0, T]$ for some finite $T$, and $(W^1, \, W^2)$ is a
2-dimensional standard Brownian motion. An equivalent martingale
measure $\qprob$ as described above exists for this model.
Nevertheless, these being the only two investment opportunities in
the market, reasonable agents would opt for the second choice over
the first. Even if diversification was sought-after, significantly
more capital would be held in the second rather than the first
stock. This huge movement of capital would change the capitalization
dynamics
--- this market does not appear to be in equilibrium, it is not
balanced.

As mentioned previously, coupled with the choice of an equivalent
martingale measure comes the choice of a \textsl{deflator} in the
market. It is a usual practice to use the interest rate offered for
risk-free investments for discounting. Nevertheless, it is
questionable whether the interest-rate structure reflects the true
market growth; a better index has to be perceived
--- and what would be more reasonable to use than the \emph{total
market capital}? Directly considering the \emph{percentage} of the
total capitalization that each individual company occupies, its
performance in terms of the ``competing'' ones is assessed.

In the spirit of the above discussion, the idea of a
\textsl{perfectly balanced market} is formulated by requiring that
\emph{the relative capitalizations $\kappa^i$ are martingales under
$(\prob, \bF)$}: $\expec[ \kappa^i_t \such \F_s] = \kappa^i_s$, for
all $i=1, \ldots, d$ and $0 \leq s < t$. The last equality means
that the best prediction about the future value of the relative
capitalization of a company given today's information is exactly the
present value of the relative capitalization. One might ask why is
this martingale property plausible. Consider, for example, what
would happen if $\expec[ \kappa^i_t \such \F_s] < \kappa^i_s$ for
some company $i$. Since at all times the sum of all the relative
capitalizations should be unit, we have $\expec[ \kappa^j_t \such
\F_s]
> \kappa^j_s$ for another company $j$. These inequalities suggest that the overall feeling
of the market is that in the future (time $t$) the $i^\textrm{th}$
company will hold on average a smaller piece of the pie than it does
today (time $s$), with the converse holding for company $j$ --- in
other words, that company $i$ is presently overrated, while company
$j$ underrated. The reasonable thing to happen is a movement of
capital from company $i$ to company $j$, which would move
$\kappa_s^i$ downwards and $\kappa_s^j$ upwards, until finally
$\expec[ \kappa^i_t \such \F_s] = \kappa^i_s$ holds for all $i=1,
\ldots, d$.

Perfect balance, as an equilibrium state, can undergo much
criticism: there will certainly be times at which the market
``slides away'' from being perfectly balanced, but it would be
reasonable to assume that the market is quickly trying to readjust
itself to that state (as was explained in some sense in the previous
paragraph). A mathematically rigorous description of this concept
would require a formulation of an ``approximate martingale''
property for the relative capitalization vector $\kappa$. The
widely-accepted idea of assuming the existence of another
probability $\qprob$ that is equivalent to $\prob$, and such that
$\kappa$ is a martingale under $\qprob$ seems to be appropriate
(actually, this exact idea has been utilized in Yan \cite{Yan}, who
has shown its equivalence to a ``no-free-lunch'' property relative
to the total capitalization $\sum_{i=1}^d S^i$), as long as $\qprob$
and $\prob$ are ``close'' in some sense . This is \emph{not} the
road that will be taken here, and there are at least two good
reasons: firstly, some (necessarily) ad-hoc, as well as difficult to
justify in economic terms, definition of distance between $\prob$
and $\qprob$ would have to be given; secondly, existence of such a
$\qprob$ is \emph{not} an $\omega$-by-$\omega$ notion (as it looks
at \emph{all} possible outcomes instead), and after all what shall
be ultimately revealed is only one outcome. However, an
$\omega$-by-$\omega$ definition of plainly \textsl{balanced markets}
(based on a characterization of perfectly balanced markets given by
observable quantities of the model) comes to the rescue --- in some
sense to be made precise later, the market is balanced if the
process $\kappa$ is \emph{close} to being a martingale, but not
quite there. The notion of balanced markets will turn out to be
\emph{strictly} weaker than the requirement of existence of such
probability $\qprob$ as described above in this paragraph.

Having decomposed the state space $\Omega$ as $\Omega_b \cup
\Omega_u$, where $\Omega_b$ is the set of outcomes where the market
is balanced and its complement $\Omega_u$ is the set of outcomes
that it is unbalanced, an analysis of the behavior of the market on
each of the above two events is in order. It turns out that on
$\Omega_u$ a \emph{single} agent can beat the whole market for
arbitrary levels of wealth, an unacceptable situation since the
total capital of the market \emph{should} consist of the sum of the
wealths of its respective agents; on the unbalanced set this breaks
down, since one \emph{particular} agent will eventually have more
capital than the whole market. It then makes sense to focus on the
balanced-market outcomes $\Omega_b$. There, it turns out that there
always exists a limiting distribution of capital $\kappa_\infty$ in
the almost sure sense. If one further assumes that the market is
\textsl{segregated}, in the sense that companies are distinct in a
very weak sense, it turns out that all capital will concentrate in a
\emph{single} company. This is probably the most dramatic failure of
\textsl{market diversity} pioneered by Fernholz \cite{Fernholz:
SPT}. In this last monograph, as well as in Fernholz, Karatzas and
Kardaras \cite{FKK}, it is shown that certain diverse markets offer
opportunities for free lunches relative to the market. Taking up on
this, the present work shows that failure of diversity
\emph{inevitably} leads to free lunches relative to the market ---
at the opposite direction, non-existence of free lunches (relative
to the market) \emph{a-fortiori} results in the accrual of capital
to one company only.

\subsection{Organization of the paper} We now give a brief overview of
the material.

Section \ref{sec: ito process model} introduces an It\^o-process
model for the capitalization of companies.

Perfectly balanced markets and their characterization in terms of
the drifts and volatilities of the capitalization processes are
discussed in section \ref{sec: perfect balance}. To ensure a
non-void discussion, abundance of perfectly balanced markets is
proved.

In section \ref{sec: growth-opt of perf balance} another
economically interesting equivalent formulation of perfectly
balanced markets is established: they achieve maximal growth. With
this characterization, we introduce implied \textsl{shadow interest
rates} in the market.

Next, the concept of balanced markets (a weakening of perfectly
balanced markets) is formulated in exact mathematical terms in
section \ref{sec: balanced econ}. As previously noted, $\Omega$ is
decomposed into $\Omega_b$ and $\Omega_u := \Omega \setminus
\Omega_b$, and we characterize the balanced outcomes event
$\Omega_b$ as the \emph{maximal} set on which an agent who decides
to invest according to any chosen portfolio does not have a chance
to beat the market for any \emph{unbounded} level. In other words,
on $\Omega_b$ agents have a \emph{chance} to beat the market by
specific levels, but this chance is approaching zero
\emph{uniformly} over all portfolios that can be used when the level
becomes arbitrarily large.

The limiting market capital distribution for balanced outcomes is
taken on in Section \ref{sec: limit distr of balanced econ}.
Existence of a limiting capital distribution $\kappa_\infty$ in an
almost sure sense is proved, and under a natural assumption of
company segregation it is shown that all capital will concentrate in
a \emph{single} company and stay there forever.

Easy examples of a simple two-company market are presented in
section \ref{sec: example} that clarify some of the points discussed
previously in the paper.

Finally, in section \ref{sec: qlc} we discuss how all previous
results are still valid in a more general quasi-left-continuous
semimartingale environment (as opposed to a plain It\^o-process
one). Note that, to the best of the author's knowledge, this is the
first time that results on market diversity in such a general
mathematical framework are discussed; in this sense, this last
section is not present just for the sake of abstract generality, but
to ensure that results obtained are not sensitive to the
continuous-semimartingale modeling choice.

\section{The It\^o-Process Model} \label{sec: ito process model}

A continuous semimartingale market model consisting of $d$ different
companies will be consider up to and before section \ref{sec: qlc}.
Actually, attention will be restricted to continuous semimartingales
whose drifts and covariations are absolutely continuous with respect
to Lebesgue measure, It\^o processes being a major example. It shall
be come clear later that this is done only for presentation reasons.

The \textsl{total capitalization} of each company $i = 1, \ldots, d$
is denoted by $S^i$. These capitalizations are modeled as strictly
positive stochastic processes on an underlying probability space
$\probtriple$, adapted to a filtration $\filtration$, assumed
right-continuous and augmented by $\prob$-null sets. The dynamics of
each $S^i$ are:
\begin{equation} \label{eq: model}
\ud S^i_t = S^i_t a^i_t \ud t + S^i_t \ud M^i_t, \ \ \textrm{ for }
i = 1,\ldots, d.
\end{equation}
Here, $a := (a^1, \ldots, a^d)$ is $\bF$-predictable and each $a^i$
represents the \textsl{rate of return} of each company, while $M :=
(M^1, \ldots, M^d)$ is a $(\prob, \bF)$-local martingale for which
we assume that the quadratic covariations satisfy $\ud [M^i, M^j]_t
= c^{i j}_t \ud t$ for a \textsl{local covariation} symmetric matrix
$c := (c^{i j})_{1 \leq i, j \leq d}$, which can be chosen
$\bF$-predictable --- we succinctly write $\ud [M, M]_t = c_t \ud t$
in obvious notation. In order for the model \eqref{eq: model} to
make sense, $a$ and $c$ must satisfy
\[
\int_0^t ( |a^i_u| + c^{ii}_u ) \ud u < \infty, \
\prob\textrm{-a.s.}, \ \ \textrm{ for all } i = 1, \ldots, d
\textrm{ and } t \in \Real_+.
\]

\begin{rem}
Let ``$\Leb$'' denote Lebesgue measure on $\Real_+$ and ``$\det$''
the square-matrix determinant operation. If $\prob[\Leb [\{t \in
\Real_+ |  \det(c_t) \neq 0 \}] = 0]=1$, then there exists a
standard $d$-dimensional $(\bF, \prob)$-Brownian motion $W \equiv
(W^1, \ldots, W^d)$ such that $\ud M_t = \inner{\sigma_t}{\ud W_t}$,
where $\sigma$ is a square root of $c$: $\sigma \sigma^\top = c$
(check for example Karatzas and Shreve \cite{Karatzas-Shreve: BM}).
Then, \eqref{eq: model} is just an It\^o process, and this model is
classic --- see Karatzas and Shreve \cite{Karatzas-Shreve: MMF}. If
$c$ is degenerate on a positive $(\prob \otimes \Leb)$-measure set,
the above representation is still valid if one extends the
probability space. Working directly with \eqref{eq: model} helps to
avoid such complications.
\end{rem}

\begin{rem}
The choice of ``$\ud t$'' above is merely for exposition purposes.
At any rate, in section \ref{sec: qlc} the model is generalized to
the broader class of quasi-left-continuous semimartingales.
\end{rem}

\begin{rem}
It will turn out that it is best to work under the (augmentation of
the) natural filtration generated by $S$, which we denote by
$\bF^S$. Nevertheless, this restriction will \emph{not} be imposed.
Sometimes, we compare obtained results under two filtrations $\bF$
and $\bG$, and it will be assumed that $\mathbf{F}$ \emph{is
contained in} $\mathbf{G}$, in the sense that $\bF \subseteq \bG$,
i.e., $\F_t \subseteq \G_t$ for all $t \in \Real_+$. If $S$ is a
semimartingale of the form (\ref{eq: model}) under $\mathbf{G}$, and
if $\mathbf{F} \supseteq \bF^S$, $S$ is also an $\bF$-semimartingale
and a representation of the form (\ref{eq: model}) is still valid,
with the rates-of-return vector $a$ possibly changed. (The local
covariation matrix $c$ will be the same.)
\end{rem}

\section{Perfectly Balanced Markets} \label{sec: perfect balance}

The significance of perfectly balanced markets has already been
discussed in the Introduction, so here we start directly with their
definition.

\begin{defn}
The \textsl{relative capitalization} $\kappa^i$ of company $i$ is
defined as
\begin{equation} \label{eq: market portf cont}
\kappa^i := \frac{S^i}{S^1 + \ldots + S^d}, \ \ \textrm{ for } i =
1, \ldots, d.
\end{equation}
The market described by (\ref{eq: model}) will be called
\textsl{perfectly balanced} with respect to the probability $\prob$
and the information flow $\bF$ if each $\kappa^i$ is a $(\prob,
\mathbf{F})$-martingale.
\end{defn}

The relative capitalizations process $\kappa := (\kappa^i)_{1 \leq i
\leq d}$ lives in the \textsl{open simplex}
\begin{equation} \label{eq: simplex}
\Delta^{d-1} := \Big\{ x \in \Real^d \such 0 < x^i < 1 \textrm{ and
} \sum_{i=1}^d x^i = 1 \Big\}.
\end{equation}

\begin{rem}
Keep the probability $\prob$ fixed. If the model (\ref{eq: model})
is perfectly balanced with respect some filtration $\bG$ that
contains $\bF$, which in turn contains $\bF^S$, then clearly it is
also perfectly balanced with respect to the information flow
$\mathbf{F}$, since the martingale property remains. The converse
does not necessarily hold: $\mathbf{F}$-perfect balance of the
market does not imply $\mathbf{G}$-perfect balance: the martingale
property might fail when enlarging filtrations. For agents with more
information (political, insider, etc.), the market might fail to
perfectly balance itself.

The weakest form of a perfectly balanced market is obtained when the
filtration is $\bF^S$ --- the one generated by $S$. In fact, an even
smaller filtration can be used, namely, the one generated by
$\kappa$ (since the filtration generated by $S$ has one extra
ingredient, which is the total capitalization $\sum_{j=1}^d S^j$
that disappears when we only consider $\kappa$). It is true that one
can do all subsequent work under this even smaller filtration
--- after all, all that we shall care about is incorporated in
$\kappa$ and if one \emph{starts} by assuming $\kappa$ is the actual
capital process, everything follows.
\end{rem}

\subsection{Characterizing perfectly balanced markets}

Using It\^o's formula and \eqref{eq: model}, it is easily computed
that for all $i = 1, \ldots, d$ we have
\begin{equation} \label{eq: SDE for relcap, general}
\ud \kappa^i_t = \kappa^i_t \inner{\e_i - \kappa_t}{a_t - c_t
\kappa_t} \ud t + \kappa^i_t \inner{\e_i - \kappa_t}{\ud M_t},
\end{equation}
where $\e_i$ the unit vector with all zero entries but the
$i^{\textrm{th}}$, which is unit.

The above equation \eqref{eq: SDE for relcap, general} for
$\kappa^i$, $i=1, \ldots, d$ gives us a way to judge whether the
market is perfectly balanced just by looking at drifts and local
covariations.

\begin{prop} \label{prop: charact of efficiency, cont}
The market is perfectly balanced if and only if there exists a
predictable, one-dimensional process $r$ with $\int_0^t \abs{r_u}
\ud u < + \infty$ for all $t \in \Real_+$, such that, with $\fone$
being the vector in $\Real^d$ will all unit entries: $\fone :=
(1,\ldots, 1)$, we have:
\begin{equation} \label{eq: c rho = b - r fone}
c \kappa = a - r \fone.
\end{equation}
\end{prop}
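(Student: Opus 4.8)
The plan is to read the condition for perfect balance directly off the semimartingale decomposition \eqref{eq: SDE for relcap, general}. First I would record the elementary but crucial observation that, since $\kappa$ takes values in the open simplex $\Delta^{d-1}$, each coordinate $\kappa^i$ is a \emph{bounded} process ($0 < \kappa^i < 1$); hence $\kappa^i$ is a $(\prob, \bF)$-martingale if and only if it is a $(\prob, \bF)$-local martingale, a bounded local martingale being automatically uniformly integrable and thus a true martingale. Consequently the market is perfectly balanced if and only if, for every $i = 1, \ldots, d$, the finite-variation part of $\kappa^i$ in \eqref{eq: SDE for relcap, general} vanishes identically. By uniqueness of the continuous finite-variation part of a continuous semimartingale (equivalently: a continuous finite-variation local martingale starting at $0$ is constant), this is in turn equivalent to $\kappa^i_t \inner{\e_i - \kappa_t}{a_t - c_t \kappa_t} = 0$ for $(\prob \otimes \Leb)$-almost every $(\omega, t)$, and since $\kappa^i > 0$ everywhere, equivalently to $\inner{\e_i - \kappa_t}{a_t - c_t \kappa_t} = 0$ for all $i = 1, \ldots, d$, $(\prob \otimes \Leb)$-a.e.

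Next I would unravel the algebraic content of these $d$ scalar identities. Writing $b_t := a_t - c_t \kappa_t$, the relation $\inner{\e_i - \kappa_t}{b_t} = 0$ says precisely that $b^i_t = \inner{\kappa_t}{b_t}$ for each $i$; so all coordinates of $b_t$ agree and equal the scalar $r_t := \inner{\kappa_t}{a_t - c_t \kappa_t}$, i.e. $a_t - c_t \kappa_t = r_t \fone$, which is exactly \eqref{eq: c rho = b - r fone}. Conversely, if \eqref{eq: c rho = b - r fone} holds for some predictable scalar $r$, then $\inner{\e_i - \kappa_t}{a_t - c_t \kappa_t} = r_t \inner{\e_i - \kappa_t}{\fone} = r_t \big( 1 - \sum_{j=1}^d \kappa^j_t \big) = 0$, so the drift in \eqref{eq: SDE for relcap, general} is zero and each $\kappa^i$ is a bounded local, hence true, martingale. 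Thus the two conditions are equivalent modulo the integrability requirement on $r$, and note that $r$ is forced: dotting \eqref{eq: c rho = b - r fone} with $\kappa$ gives $r = \inner{\kappa}{a - c \kappa}$, which is manifestly predictable.

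It remains to verify the integrability $\int_0^t \abs{r_u} \ud u < \infty$, which is the only genuinely computational point. Here $r = \inner{\kappa}{a} - \inner{\kappa}{c \kappa}$; the first term satisfies $\abs{\inner{\kappa}{a}} \leq \sum_{i=1}^d \abs{a^i}$ because $0 < \kappa^i < 1$ and $\sum_i \kappa^i = 1$. For the second, positive semidefiniteness of the covariation matrix $c$ yields $\abs{c^{ij}} \leq (c^{ii} + c^{jj})/2$, whence $\abs{\inner{\kappa}{c \kappa}} \leq \sum_{i,j} \kappa^i \kappa^j (c^{ii} + c^{jj})/2 = \sum_{i=1}^d \kappa^i c^{ii} \leq \sum_{i=1}^d c^{ii}$. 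Both bounds are, $\prob$-a.s., locally integrable in $t$ by the standing assumption $\int_0^t ( \abs{a^i_u} + c^{ii}_u ) \ud u < \infty$, so $\int_0^t \abs{r_u} \ud u < \infty$; the same estimates also show that the drift integrand in \eqref{eq: SDE for relcap, general} is locally integrable, justifying that $\kappa^i$ is a special semimartingale with \eqref{eq: SDE for relcap, general} as its canonical decomposition (a point used above). The main obstacle, modest as it is, is keeping rigorous the chain ``vanishing drift $\Leftrightarrow$ local martingale $\Leftrightarrow$ martingale'' (uniqueness of the decomposition, plus boundedness upgrading local to true martingale) and maintaining clean $(\prob \otimes \Leb)$-null-set bookkeeping across the $d$ coordinates.
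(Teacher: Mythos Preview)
Your proof is correct and follows essentially the same route as the paper: boundedness of $\kappa^i$ reduces the martingale property to the local-martingale property, which via \eqref{eq: SDE for relcap, general} becomes $\inner{\e_i - \kappa}{a - c\kappa} = 0$ for all $i$, and this is equivalent to $a - c\kappa$ being a scalar multiple of $\fone$. The only cosmetic difference is that the paper phrases the linear-algebra step as ``the vectors $\e_i - \kappa$ span $\fone^\perp$'' whereas you compute directly that all coordinates of $a - c\kappa$ coincide; your explicit formula $r = \inner{\kappa}{a - c\kappa}$ and the accompanying integrability estimate are a welcome elaboration of what the paper dispatches in one line.
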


\begin{proof}
Each of the processes $\kappa^i$ is bounded; therefore it is a
martingale if and only if it is a local martingale, which by view of
\eqref{eq: SDE for relcap, general} will hold if and only if
$\inner{\e_i - \kappa}{a - c \kappa} = 0$. The vector processes
$(\e_i - \kappa)_{1 \leq i \leq d}$ span the linear subspace that is
orthogonal to $\fone$. Thus, in order for $\kappa$ to be a
martingale there should exist a one-dimensional process $r$ such
that $a - c \kappa = r \fone$. The fact that $r$ can be chosen
predictable and locally integrable follows from the fact that both
$c \kappa$ and $a$ have the corresponding properties.
\end{proof}

\begin{rem} \label{rem: interest rate, prelim}
It should be noted here that the process $r$ plays the r\^ole of a
\emph{shadow interest rate} in the market, in the absence of a
banking device that will produce one. To support this claim, suppose
for a minute that one of the companies, say the first, behaves like
a savings account, so that (if only approximately) $S^1$ has only a
``$\ud t$'' component, i.e., $c^{1 i} = 0$ for $i = 1, \ldots, d$.
Multiplying from the left both sides of the relationship (\ref{eq: c
rho = b - r fone}) with the unit vector $\e_1$ we get $a^1 = r$,
i.e., that $r$ is the interest rate. In the absence of a risk-free
company one cannot carry the previous analysis, but an
equilibrium-type argument gives the same conclusion; we come back to
this point in subsection \ref{subsec: interest rate} with a more
thorough discussion.
\end{rem}

Remark  \ref{rem: interest rate, prelim} makes it plausible to
define an \textsl{interest rate process} as being a predictable,
one-dimensional process $r$ with $\int_0^t \abs{r_u} \ud u < +
\infty$ for all $t \in \Real_+$.

\smallskip

The result of Proposition \ref{prop: charact of efficiency, cont}
should be interpreted as a \emph{linear} relationship between the
local covariation and the drifts of the company capitalization
processes, modulo an interest rate process. It is obvious that this
is a \emph{very} restrictive condition; we shall see in Section
\ref{sec: balanced econ} how to weaken it, and we shall discuss how
this softer notion of (not necessarily perfectly) balanced markets
ties with efficiency.

\subsection{Construction of perfectly balanced markets}

Equations  \eqref{eq: SDE for relcap, general} and \eqref{eq: c rho
= b - r fone} combined imply that in a perfectly balanced market the
process $\kappa$ must satisfy the following system of stochastic
differential equations:
\begin{equation} \label{eq: SDE for mu, eff}
\ud \kappa^i_t = \kappa^i_t \inner{\e_i - \kappa_t}{\ud M_t},
\textrm{ for all }  i = 1, \ldots, d.
\end{equation}

The natural question to ask at this point is: \emph{do mathematical
models of perfectly balanced markets exist?} If they do exist,
\eqref{eq: c rho = b - r fone} as well as the stochastic
differential equations \eqref{eq: SDE for mu, eff} must hold. The
following proposition shows that a plethora of perfectly balanced
models exist.

\begin{thm} \label{thm: existence of perf bal econ, cont}
Consider a $d$-dimensional  continuous $(\bF, \prob)$-local
martingale $M$ whose quadratic covariation process satisfies $\ud
[M, M]_t = c_t \ud t$. Then, for any $\F_0$-measurable initial
condition $\kappa_0 \equiv (\kappa_0^i)_{1 \leq i \leq d}$ with
$\prob[\kappa_0 \in \Delta^{d-1}] = 1$ the system of stochastic
differential equations \eqref{eq: SDE for mu, eff} has a unique
strong solution for all $t \in \Real_+$ that lives on
$\Delta^{d-1}$.

Further, for any interest rate process $r$ and $\F_0$-measurable
initial condition $S_0 \equiv (S_0^i)_{1 \leq i \leq d}$ with $S^i_0
/ \sum_{j=1}^{d} S^j_0 = \kappa^i_0$, if we define $a := c \kappa +
r \fone$ and the process $S$ via \eqref{eq: model}, we get a model
of a perfectly balanced market.
\end{thm}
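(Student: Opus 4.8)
The plan is to split the argument into the two asserted parts: first, solving the system \eqref{eq: SDE for mu, eff} and showing the solution stays in $\Delta^{d-1}$; second, reconstructing $S$ from $\kappa$, $r$ and $S_0$ and verifying that the resulting model is perfectly balanced. For the first part I would note that the coefficient of $\ud M^j_t$ in the equation for $\kappa^i$ is $\kappa^i(\e_i - \kappa)^j = \kappa^i(\delta_{ij} - \kappa^j)$, which is a polynomial (hence locally Lipschitz and of linear growth) function of $\kappa$ on all of $\Real^d$; since $M$ is a continuous local martingale with $\ud[M,M]_t = c_t\,\ud t$, standard existence-and-uniqueness results for SDEs driven by continuous semimartingales with locally Lipschitz coefficients (e.g. Protter, or the It\^o-process case via Karatzas--Shreve when $c$ admits a square root) give a unique strong solution up to a possible explosion time. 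To rule out explosion and confine the solution to $\Delta^{d-1}$, the key observation is that $\sum_{i=1}^d \kappa^i$ solves $\ud(\sum_i \kappa^i_t) = \inner{\sum_i \kappa^i_t(\e_i - \kappa_t)}{\ud M_t}$; but $\sum_i \kappa^i(\e_i - \kappa) = \kappa - (\sum_i \kappa^i)\kappa = (1 - \sum_i\kappa^i)\,\kappa \cdot(\text{scalar})$, so if we set $Y_t := \sum_i \kappa^i_t - 1$ then $Y$ solves a linear SDE $\ud Y_t = Y_t \inner{\kappa_t}{\ud M_t}$ with $Y_0 = 0$, whence $Y \equiv 0$ by uniqueness. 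Thus $\sum_i \kappa^i_t \equiv 1$. For positivity, each $\kappa^i$ solves the linear (in $\kappa^i$) equation $\ud \kappa^i_t = \kappa^i_t \inner{\e_i - \kappa_t}{\ud M_t}$, so formally $\kappa^i_t = \kappa^i_0 \exp\big(\int_0^t \inner{\e_i - \kappa_u}{\ud M_u} - \tfrac12\int_0^t (\e_i-\kappa_u)^\top c_u (\e_i - \kappa_u)\,\ud u\big)$, which is strictly positive whenever $\kappa^i_0 > 0$ and the stochastic integral does not explode. One makes this rigorous by a localization argument: on the stochastic interval before $\kappa$ exits a compact subset of $\Delta^{d-1}$ the coefficients are genuinely Lipschitz and bounded, the exponential representation is valid, and combined with $\sum_i \kappa^i \equiv 1$ it shows $\kappa$ cannot reach the boundary in finite time, so the exit time is $+\infty$ a.s. and the solution is global and $\Delta^{d-1}$-valued.

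For the second part, given $r$, $S_0$, and the solution $\kappa$ from part one, define $a := c\kappa + r\fone$ (note $a$ is predictable and locally integrable, since $c\kappa$ is bounded times locally integrable and $r$ is an interest rate process) and define $S^i$ as the stochastic exponential $S^i_t := S^i_0 \,\mathcal E(a^i \cdot (\int_0^\cdot \ud u) + M^i)_t = S^i_0 \exp\big(\int_0^t a^i_u\,\ud u + M^i_t - \tfrac12 [M^i,M^i]_t\big)$, so that $S$ satisfies \eqref{eq: model} and is strictly positive. It then remains to check that $S^i/(\sum_j S^j) = \kappa^i$, so that the $\kappa^i$ of the \emph{constructed} model coincide with the given martingale-type processes. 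The cleanest way is uniqueness: by It\^o's formula the relative capitalizations $\widetilde\kappa^i := S^i/\sum_j S^j$ of the constructed $S$ satisfy \eqref{eq: SDE for relcap, general} with $a = c\kappa + r\fone$, and since $a - c\kappa = r\fone$ is orthogonal-complement-trivial in the sense that $\inner{\e_i - \widetilde\kappa}{a - c\kappa} $ need not vanish a priori — so instead I would argue directly that $\widetilde\kappa$ satisfies the \emph{same} SDE system \eqref{eq: SDE for mu, eff}: plugging $a = c\kappa + r\fone$ into \eqref{eq: SDE for relcap, general} gives $\ud\widetilde\kappa^i = \widetilde\kappa^i\inner{\e_i - \widetilde\kappa}{c\kappa - c\widetilde\kappa}\ud t + \widetilde\kappa^i\inner{\e_i - \widetilde\kappa}{\ud M}$, which is \emph{not} obviously \eqref{eq: SDE for mu, eff} unless $\widetilde\kappa = \kappa$. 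So the honest route is to show $\widetilde\kappa_0 = \kappa_0$ (immediate from the hypothesis $S^i_0/\sum_j S^j_0 = \kappa^i_0$) and then that the pair $(\kappa, \widetilde\kappa)$ satisfies a closed system to which a uniqueness theorem applies; alternatively, and more simply, observe that $\widetilde\kappa$ with the drift rewritten as $\widetilde\kappa^i\inner{\e_i-\widetilde\kappa}{\,c(\kappa - \widetilde\kappa)\,}$ vanishing identically once we substitute $\kappa = \widetilde\kappa$ means $\kappa$ itself is \emph{a} solution of the equation defining $\widetilde\kappa$, and by the strong uniqueness already established for that equation (same structure, locally Lipschitz) $\widetilde\kappa = \kappa$. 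Hence each $\kappa^i$ is a bounded local martingale, therefore a true martingale, and by Proposition \ref{prop: charact of efficiency, cont} (or directly) the constructed market is perfectly balanced.

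The main obstacle I anticipate is the boundary behavior in part one: making rigorous that $\kappa$ stays strictly inside $\Delta^{d-1}$ and never explodes, despite the SDE coefficients being only \emph{locally} Lipschitz on $\Real^d$ rather than globally well-behaved. The combination of the two a priori estimates — the exact identity $\sum_i\kappa^i_t \equiv 1$ and the strictly-positive stochastic-exponential form of each coordinate — is what does the job, but it has to be assembled via a localization sequence (exit times of compact subsets of $\Delta^{d-1}$) and one must argue these exit times increase to $+\infty$. A secondary, more bookkeeping-level obstacle is confirming in part two that the constructed $a$ inherits the required integrability $\int_0^t|a^i_u|\,\ud u < \infty$; this is routine since $\kappa$ is bounded (so $c\kappa$ has the integrability of $c$) and $r$ is by definition locally integrable.
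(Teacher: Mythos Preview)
Your proof is correct and follows essentially the same route as the paper: local existence from Lipschitz coefficients, the invariant $\sum_i\kappa^i\equiv 1$ via the linear SDE $\ud z_t = -z_t\inner{\kappa_t}{\ud M_t}$ with $z_0=0$, and the stochastic-exponential form of each $\kappa^i$ (equivalently, finiteness of the drift and quadratic variation of $\log\kappa^i$ on bounded intervals) to exclude hitting $\partial\Delta^{d-1}$ in finite time. Two small remarks: the diffusion coefficients $\kappa^i(\delta_{ij}-\kappa^j)$ are quadratic, not of linear growth on $\Real^d$ --- harmless here since your simplex-confinement argument renders growth conditions moot, and the paper simply works on $[0,1]^d$ where they are globally Lipschitz --- and your careful verification in part two that $\widetilde\kappa=\kappa$ (via uniqueness for the $\widetilde\kappa$-SDE with $\kappa$ treated as a known input) actually supplies more detail than the paper, which dismisses this step as ``obvious from our previous discussion.''
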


\begin{proof}

The second paragraph of the Proposition's statement is obvious from
our previous discussion; we only need prove that the system of
stochastic differential equations \eqref{eq: SDE for mu, eff} has a
unique strong solution for $t \in \Real_+$ that lives on
$\Delta^{d-1}$.

To begin, consider the unit cube $[0,1]^d$ in $\Real^d$. The
volatility co\"efficients appearing in \eqref{eq: SDE for mu, eff}
are quadratic in $\kappa$, thus are obviously Lipschitz as a
functions of $\kappa$ on $[0,1]^d$; then, the standard theorem on
strong solutions of stochastic differential equations gives that
\eqref{eq: SDE for mu, eff} has a unique strong solution for $t$ in
a stochastic interval $\dbra{0, \tau}$, where $\tau$ is a stopping
time such that for all $t < \tau$ we have $\kappa_t \in (0,1)^d$,
while on $\{ \tau < + \infty \}$ we have $\kappa_\tau \in
\partial [0,1]^d$ (the \textsl{boundary} of $[0,1]^d$). First, it will be shown that $\kappa$ is
$\Delta^{d-1}$-valued on $\dbra{0,\tau}$, and then that $\prob[\tau
= +\infty] = 1$.

Using \eqref{eq: SDE for mu, eff} one can compute that on
$\dbra{0,\tau}$ the process $z := 1 - \inner{\fone}{\kappa}$
satisfies the stochastic differential equation $\ud z_t = - z_t
\inner{\kappa_t}{\ud M_t}$ (observe that now $\kappa$ is known).
Since $z_0 = 0$, the unique strong solution of this last equation is
$z \equiv 0$, so $\inner{\fone}{\kappa} = 1$ and $\kappa$ is
$\Delta^{d-1}$-valued on $\dbra{0,\tau}$.

Now, on $\dbra{0,\tau}$ we have $0 < \kappa^i < 1$ for each
$i=1,\ldots d$. Using It\^o's formula for the logarithmic function
and \eqref{eq: SDE for mu, eff} once again we get for $t \in
\dbraco{0,\tau}$ that
\[
 \log \kappa^i_t = \log \kappa^i_0 - \frac{1}{2} \int_0^t \inner{\e_i - \kappa_u}{c_u (\e_i -
\kappa_u)} \ud u + \int_0^t \inner{\e_i - \kappa_u}{\ud M_u}.
\]
Both the finite-variation part and the quadratic variation of the
local martingale part of the semimartingale $\log \kappa^i$ are
finite on any bounded interval as long as $\kappa \in \Delta^{d-1}$;
it follows that on the event $\{ \tau < +\infty \}$ we have $\lim_{t
\uparrow \tau} \log \kappa_t^i \in \Real$, which implies that
$\lim_{t \uparrow \tau} \kappa^i_t > 0$. Since $\kappa$ is
$\Delta^{d-1}$-valued on $\dbra{0, \tau}$, it also follows that
$\lim_{t \uparrow \tau} \kappa^i_t < 1$ for all $i = 1, \ldots, d$.
This contradicts the fact that we are assumed to work on the event
$\{ \tau < +\infty \}$, therefore $\prob[\tau = + \infty] = 1$.
\end{proof}

\begin{rem}
One of the reasons \emph{not} to require $\bF$ to be the one
generated by $S$ is the constructive Theorem \ref{thm: existence of
perf bal econ, cont}, where we start a-priori with some filtration
$\bF$ that makes $M$ a $\prob$-martingale and $r$ adapted. If
wanted, after the construction of $\kappa$ has been carried out we
can pass from $\bF$ to the generally smaller $\bF^S$.
\end{rem}

\begin{rem}
Apart from its mathematical significance, Theorem \ref{thm:
existence of perf bal econ, cont} also has interesting economic
implications. When writing the dynamics \eqref{eq: model} of a model
we assume that \emph{both} the drift vector $a$ and the local
covariation matrix $c$ are observable. Nevertheless, both in a
statistical and in a philosophical sense, covariances are easier to
assess than drifts. From a statistical point of view, high-frequency
data can lead to reasonably good estimation of $c$ --- and the ideal
case of continuously-collected data leads to perfect estimation.
Nevertheless, there is no easy way to estimate $a$, even if we
assume it is a constant: one has to wait for too long a time to get
any sensible estimate. In a more philosophical sense, economic
agents might not have a complete sense of how the prices will move,
but they might very well have an idea of how risky the companies
are, and how a change in the capitalization of one company would
affect another one, i.e., exactly the local covariation matrix $c$.
To this effect, Theorem \ref{thm: existence of perf bal econ, cont}
implies that simple knowledge of the local covariations $c$, the
interest rate $r$ (see Remark \ref{rem: interest rate, prelim} and
subsection \ref{subsec: interest rate} in this respect) and the
relative capitalizations at time $t = 0$ is enough to provide the
whole process of relative capitalizations; and by this, we also get
the drifts $a$. Thus, in perfectly balanced markets, a good estimate
of $c$ is enough to provide good estimates for the drift $a$ as
well.
\end{rem}

\section{Growth-Optimality of Perfectly Balanced Markets} \label{sec: growth-opt of perf balance}

We discuss here an ``economically optimal'' property of perfectly
balanced markets that actually turns out to be an equivalent
formulation in a sense. We also elaborate on how the process $r$ of
Proposition \ref{prop: charact of efficiency, cont} should be
thought as a shadow interest rate prevailing in the market.

\subsection{Agents and investment}
In a market with $d$ companies whose capitalizations are described
by the dynamics \eqref{eq: model}, we also consider a \emph{savings
account} offered by a bank, described by some interest rate process
$r$. One unit of currency invested in (i.e., loaned to) the bank at
time $s$ will grow to $\int_s^t r_u \ud u$ by time $t > s$. We
remark that existence of a bank does not add wealth to the market
directly, although can do so indirectly by adding more flexibility
to the financial agents --- in other words, the net amount invested
in the bank must be zero: some lend and some borrow, but the total
position should be neutral.

We now discuss the behavior of an individual agent in the market;
this agent decides to invest a portion of the total capital-in-hand
in each of the $d$ companies, and the remaining wealth in the
savings account. We shall be denoting by $\pi^i_t$ the proportion of
the capital invested in the $i^{\textrm{th}}$ company; then, $1 -
\inner{\pi}{\fone}$ proportion of the capital-in-hand is put into
savings. To ensure than no clairvoyance into the future is allowed,
the vector process $\pi := (\pi^i)_{1 \leq i \leq d}$ should be
predictable with respect to the filtration of the individual agent,
which is at least as large as $\bF^S$.

We model portfolio constraints that an agent might be faced with via
a set-valued process $\C$; henceforth we shall be assuming that for
each $(\omega, t) \in \Omega \times \Real_+$:
\begin{enumerate}
  \item $\overline{\Delta}^{d-1} \subseteq \C (\omega, t)$, where $\overline{\Delta}^{d-1}$ is the closure of the open simplex of \eqref{eq: simplex};
  \item the set $\C(\omega, t)$ is closed and convex; and
  \item $\C$ is predictable, in the sense that $\{ (\omega, t) \in \Omega \times \Real_+ \such \C(\omega, t) \cap F \neq \emptyset \}$ is a predictable set for all closed $F \subseteq
  \Real^d$.
\end{enumerate}
Then, a \textsl{$\C$-constrained portfolio} is a predictable,
$d$-dimensional process $\pi$ that satisfies $\pi(\omega, t) \in
\C(\omega, t)$ for all $(\omega, t) \in \Omega \times \Real_+$, and
\begin{equation} \label{eq: integrab for portf}
\int_0^t ( |\inner{\pi_u}{a_u}| + \inner{\pi_u}{c_u \pi_u} ) \ud u <
\infty, \textrm{ for all } t \in \Real_+.
\end{equation}
The class of all $\C$-constrained portfolios is denoted by $\Pi_\C$.

\smallskip

The most important case in the discussion to follow is the most
restrictive case of constraints $\C = \overline{\Delta}^{d-1}$: the
agent has only access to invest in the ``actual'' companies of the
market --- in this case, the bank is not even needed.

The portfolio integrability requirement (\ref{eq: integrab for
portf}) is a technical one, but it is the weakest assumption in
order for the stochastic integrals appearing below in (\ref{eq:
wealth process}) to make sense. The requirement is certainly
satisfied if $\pi$ is $\prob$-a.s. bounded on every interval $[0,t]$
for $t \in \Real_+$ --- for example if $\pi$ is
$\overline{\Delta}^{d-1}$-valued.

\smallskip

The initial investment of an agent at time zero is always normalized
to be a unit of currency. Assuming this and investing according to
$\pi \in \Pi_\C$, the corresponding wealth process $V^\pi$ of the
particular agent is described by $V^\pi_0 = 1$ and
\begin{equation} \label{eq: wealth process}
\frac{\ud V^\pi_t}{V^\pi_t} \ = \ \sum_{i=1}^d \pi_t^i \frac{\ud
S^i_t}{S^i_t} + \Big( 1 - \sum_{i=1}^d \pi_t^i \Big) r_t \ud t \ = \
\big(r_t + \inner{\pi_t}{a_t - r_t \fone}\big) \ud t +
\inner{\pi_t}{\ud M_t}.
\end{equation}

The collective investment of all agents is captured by the
percentage of the total market capitalization invested in each
company, i.e., the relative capitalizations $\kappa = (\kappa^i)_{1
\leq i \leq d}$, which is an $\mathbf{F}$-predictable vector process
(as it is $\bF^S$-adapted and continuous) and satisfies $\kappa \in
\Delta^{d-1}$; thus $\kappa$ can be viewed as a portfolio, and as
such it is called the \textsl{market portfolio}. Here is the reason
for such a name: using (\ref{eq: wealth process}) one checks that
$V^\kappa = \inner{S}{\fone} / \inner{S_0}{\fone}$, where
$\inner{S}{\fone} = \sum_{i=1}^d S^i$ is the total capital of the
market: investing according to $\kappa$ is tantamount to owning the
whole market, relative to the initial investment, which is
normalized to unit.

\subsection{Growth and growth-optimality}

The process $a^\pi := \inner{\pi}{a}$ appearing in \eqref{eq: wealth
process} is known as the \textsl{rate of return} of $V^\pi$; it is
the instantaneous return that the strategy gives on the invested
capital. Nevertheless, for long-time-horizon investments, rates of
return fail to give a good idea of the behavior of the wealth
process. A more appropriate tool for analyzing asymptotic behavior
is the growth rate (see for example Fernholz \cite{Fernholz: SPT}),
which we now define.

For a portfolio $\pi \in \Pi_\C$, its log-wealth process is the
semimartingale $\log V^\pi$. It\^o's formula gives $\ud \log V^\pi_t
= \g^\pi_t \ud t + \inner{\pi_t}{\ud M_t}$, where
\begin{equation} \label{eq: growth rate}
\g^\pi := r + \inner{\pi}{a - r \fone} - \frac{1}{2} \inner{\pi}{c
\pi}
\end{equation}
is the \textsl{growth rate} of the portfolio $\pi$. The portfolio
$\rho \in \Pi_\C$ will be called \textsl{growth-optimal in the
$\C$-constrained class} if
\begin{equation} \label{eq: growth optimal maxim}
\g^{\rho} (\omega, t) = \g^* (\omega, t) := \sup_{\pi \in \C} \g^\pi
(\omega, t), \textrm{ for all } (\omega, t) \in \Omega \times
\Real_+.
\end{equation}
The whole market is called a \textsl{growth market} if the market
portfolio $\kappa$ is growth optimal over all possible portfolios.

\begin{prop} \label{prop: per bal iff growth}
A market described by an interest rate process $r$ and \eqref{eq:
model} for the company capitalizations is a growth market if and
only if $c \kappa = a - r \fone$.
\end{prop}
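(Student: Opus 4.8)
The plan is to show the claimed equivalence by directly maximizing the (pointwise, $(\omega,t)$-wise) concave function $\pi \mapsto \g^\pi$ from \eqref{eq: growth rate} over the constraint set $\C = \overline{\Delta}^{d-1}$ and comparing its maximizer to the market portfolio $\kappa$. The key observation is that $\g^\pi$ is a quadratic in $\pi$ with $\nabla_\pi \g^\pi = (a - r\fone) - c\pi$; since we are told the market is a growth market means $\kappa$ is growth-optimal \emph{over all portfolios}, the relevant unconstrained first-order condition $\nabla_\pi \g^\pi = 0$ at $\pi = \kappa$ reads exactly $a - r\fone - c\kappa = 0$, i.e., \eqref{eq: c rho = b - r fone}. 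So at heart the proposition is a restatement of the Lagrange/stationarity condition for quadratic optimization, and the work is in handling the two directions and the role of the constraint carefully.

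For the ``if'' direction, I would assume $c\kappa = a - r\fone$ and compute, for an arbitrary portfolio $\pi$, the difference
\begin{equation*}
\g^\kappa - \g^\pi = \inner{\kappa - \pi}{a - r\fone} - \tfrac12\inner{\kappa}{c\kappa} + \tfrac12\inner{\pi}{c\pi}.
\end{equation*}
Substituting $a - r\fone = c\kappa$ turns the right-hand side into $\inner{\kappa - \pi}{c\kappa} - \tfrac12\inner{\kappa}{c\kappa} + \tfrac12\inner{\pi}{c\pi} = \tfrac12\inner{\kappa - \pi}{c(\kappa - \pi)} \geq 0$, using positive semidefiniteness of $c$ (it is a local covariation matrix). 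Hence $\g^\kappa(\omega,t) \geq \g^\pi(\omega,t)$ pointwise for every $\pi$, so $\kappa$ is growth-optimal over all portfolios and the market is a growth market. Here one should note that $\kappa$ is indeed an admissible portfolio (it is $\overline{\Delta}^{d-1}$-valued, in fact $\Delta^{d-1}$-valued, hence bounded, so \eqref{eq: integrab for portf} holds) and that $r$ is an interest rate process, so the savings account is well-defined.

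For the ``only if'' direction, suppose the market is a growth market, so $\g^\kappa \geq \g^\pi$ for all portfolios $\pi$ at every $(\omega,t)$. The standard trick is to perturb: for a fixed direction $v \in \Real^d$ with $\inner{v}{\fone}$ arbitrary, consider $\pi_\eps := \kappa + \eps v$ for small $\eps$ (any bounded deterministic $v$ gives an admissible portfolio since $\C$ contains all of $\overline{\Delta}^{d-1}$ and is enlarged beyond it — actually one must check $\pi_\eps$ lands in $\C$, but since growth-optimality is asserted ``over all possible portfolios'' and $\C \supseteq \overline{\Delta}^{d-1}$, one can at minimum perturb within directions keeping $\pi_\eps \in \overline{\Delta}^{d-1}$, and by taking $\kappa$ in the relative interior $\Delta^{d-1}$ all directions $v$ with $\inner{v}{\fone}=0$ are feasible for small $\eps$). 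Then $\g^{\pi_\eps} = \g^\kappa + \eps \inner{v}{a - r\fone - c\kappa} - \tfrac{\eps^2}{2}\inner{v}{cv}$, and optimality of $\kappa$ forces the linear term to vanish: $\inner{v}{a - r\fone - c\kappa} = 0$ for all $v \perp \fone$. This says $a - r\fone - c\kappa$ is parallel to $\fone$; but since $r$ is already a free one-dimensional process, we may absorb that multiple of $\fone$ into $r$ (replacing $r$ by $r + $ that scalar, which remains predictable and locally integrable because $a$ and $c\kappa$ are), obtaining $c\kappa = a - r\fone$ with the (possibly relabeled) interest rate $r$.

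The main obstacle I anticipate is the admissibility/feasibility bookkeeping in the ``only if'' direction: one must justify that enough perturbations $\pi_\eps$ are genuinely admissible portfolios so that the pointwise stationarity condition can be extracted for a.e. $(\omega,t)$, and one must be slightly careful about whether the statement pins down $r$ uniquely or only up to the structural relation — the cleanest reading (consistent with Proposition~\ref{prop: charact of efficiency, cont}) is that a growth market is one for which \emph{there exists} such an $r$, so the conclusion is the existence of the relation. Everything else — the $\eps$-expansion, the semidefiniteness argument, the predictability of $r$ — is routine given \eqref{eq: growth rate} and the hypotheses already in place.
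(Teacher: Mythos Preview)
Your ``if'' direction is correct and in fact cleaner than the paper's: the paper simply invokes concavity and first-order conditions, while you explicitly complete the square to get $\g^\kappa - \g^\pi = \tfrac12\inner{\kappa-\pi}{c(\kappa-\pi)} \geq 0$.

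The ``only if'' direction, however, has a genuine gap. In this proposition the interest rate $r$ is \emph{given} as a market parameter --- the paper makes this explicit in subsection~\ref{subsec: interest rate}, contrasting Proposition~\ref{prop: per bal iff growth} with Proposition~\ref{prop: charact of efficiency, cont} where $r$ is inferred. So you are not free to ``absorb that multiple of $\fone$ into $r$'': the equality $c\kappa = a - r\fone$ must hold with the $r$ already fixed. Your restriction to perturbations $v$ with $\inner{v}{\fone}=0$ (to stay in $\overline{\Delta}^{d-1}$) only yields $a - r\fone - c\kappa \parallel \fone$, which is strictly weaker than what is claimed. The fix is the one you yourself announced in your opening paragraph but then abandoned: a growth market is one where $\kappa$ is growth-optimal \emph{over all possible portfolios}, and the paper's proof accordingly maximizes over all $\p \in \Real^d$ without constraint. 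With unconstrained perturbations $\pi_\eps = \kappa + \eps v$ for arbitrary $v \in \Real^d$, the first-order condition $\inner{v}{a - r\fone - c\kappa} = 0$ for all $v$ gives $a - r\fone - c\kappa = 0$ directly, with the given $r$. Your admissibility worries about $\pi_\eps$ landing in $\C$ are unnecessary here: the growth-optimality in \eqref{eq: growth optimal maxim} is a pointwise supremum over $\p \in \C$, and the relevant $\C$ is all of $\Real^d$.
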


\proof In order to have a growth market, $\kappa$ must solve the
quadratic problem
\begin{equation} \label{eq: growth quadratic maxim}
\max_\p \Big\{ r + \inner{\p}{a - r \fone} - \frac{1}{2}
\inner{\p}{c \p} \Big\}
\end{equation}
over all $\p \in \Real^d$ where we have hidden the dependence on
$(\omega, t)$. The growth rate function of \eqref{eq: growth rate}
is concave, and first-order conditions imply that in order for
$\kappa$ to be a solution to the optimization problem we must have
$a - c \kappa = r \fone$. \qed

\begin{rem} \label{rem: on numeraire property of growth optimal}
Generalizing a bit the method-of-proof of Proposition \ref{prop: per
bal iff growth}, we can give the following characterization: $\rho$
is $\C$-constrained growth optimal portfolio if and only if $V^\pi /
V^\rho$ is a supermartingale for all $\pi \in \Pi_\C$. Indeed, for
\emph{any} two portfolios $\pi$ and $\rho$, one can use \eqref{eq:
wealth process} and It\^o's formula to get that $V^\pi / V^\rho$ is
a supermartingale is and only if $\inner{\pi - \rho}{a - r \fone - c
\rho} \leq 0$; this is exactly the first-order condition for
maximization of \eqref{eq: growth quadratic maxim} over $\C$.
\end{rem}

\begin{rem}
Statistical tests of the ``perfectly balanced market'' hypothesis
have appeared in the literature in the seventies, where it was
actually tested whether the market portfolio is equal to the
growth-optimal one (the connection is obvious in view of Proposition
\ref{prop: per bal iff growth} --- see also the discussion in the
next subsection \ref{subsec: interest rate}). We mention in
particular the works of Roll \cite{Roll}, as well as Fama and
MacBeth \cite{Fama-McBeth} that treat the New York Stock Exchange as
the ``market''. In both papers, there does not seem to be conclusive
evidence on whether the perfect-balance hypothesis holds or not;
although it cannot be rejected at any reasonably high statistical
significance level, there are noteworthy deviations mentioned
therein.
\end{rem}

\subsection{Interest rate} \label{subsec: interest rate}

Proposition \ref{prop: per bal iff growth} clearly shows the
connection between growth and perfectly balanced markets. The
difference between Propositions \ref{prop: charact of efficiency,
cont} and \ref{prop: per bal iff growth} is that in the former we
\emph{infer} the existence of an interest rate $r$ that satisfies $c
\kappa = a - r \fone$, while in the latter the interest rate process
is given as a market parameter.

In fact, if existence of an interest rate process is not assumed,
and a growth market is defined as one where $\kappa$ maximizes the
growth rate over all portfolios in the constrained set $\C = \{ x
\in \Real^d \such \inner{x}{\fone} = 1 \}$, then going through the
proof of Proposition \ref{prop: per bal iff growth} using
Lagrange-multiplier theory for constrained optimization, the
relationship $c \kappa = a - r \fone$ for \emph{some} interest rate
process $r$ will be inferred again, exactly as in the case of
perfectly balanced markets. Thus, the two concepts of growth and
perfectly balanced markets are identical in this sense.

\smallskip

Now, an equilibrium argument will be used to show that even in the
absence of a bank, the arbitrary process $r$ obtained in the case
where the market is perfectly balanced really plays a r\^ole of an
interest rate. Suppose that all of a sudden, the market decides to
build a bank and has to decide on what interest rate $\tilde{r}$ to
offer. In the next paragraph we answer the following question:
\emph{What should this process $\tilde{r}$ be in order for the
market to stay in perfectly balanced state}? Then, $\tilde{r}$ is an
an equilibrium interest rate.

Before the introduction of a bank the market was perfectly balanced,
i.e., $c \kappa = a - r \fone$ was true for \emph{some}
one-dimensional process $r$. The introduction of a savings account
gives more freedom to individual agents: now they can borrow or lend
at the risk-free interest rate $\tilde{r}$. The ``representative
agent'' in the augmented (with the bank) market will still try to
maximize growth, as before, and for this representative agent the
wealth proportion held in the bank should be zero. Indeed, if in
trying to maximize the growth rate the representative agent found
that the optimal holdings in the risk-free security is positive, the
overall feeling of the agents is that the interest rate level
$\tilde{r}$ is attractive for saving, and more agents would be
inclined to save money that to borrow for investment in the riskier
company of the market; this would create instability because supply
for funds to be invested in riskier companies would exceed demand.
The exact opposite of what was just described would happen if the
representative agent's optimal holdings in the risk-free security
were negative. Proposition \ref{prop: per bal iff growth} implies
that after the introduction of a bank we should have $c \kappa = a -
\tilde{r} \fone$; nevertheless, just before the bank appeared we had
$c \kappa = a - r \fone$. The only way that both can hold is $r =
\tilde{r}$, which shows that $r$ really plays the r\^ole of an
equilibrium interest rate process, even in the absence of a bank.

\section{Balanced Markets} \label{sec: balanced econ}

The definition of a perfectly balanced market is restrictive, since
the martingale property for the relative capitalizations is not
expected to exactly hold. Sometimes it might fail and it also might
take some time to return to equilibrium, as explained in the
Introduction. We therefore want to say that the market will be
balanced (though not necessarily perfectly) if the martingale
property holds only ``approximately''. No such reasonable notion
exists, and one needs to work around it. In this section we
elaborate on balanced markets and their close relation to a concept
of ``efficiency''.

\subsection{Formal definitions}

According to Proposition \ref{prop: per bal iff growth} and the
content of subsection \ref{subsec: interest rate}, a market equipped
with a bank is perfectly balanced if and only if $\g^\kappa = \g^*$,
where $\g^* \equiv \g^{* (\bF, \C)}$ is the maximal growth that can
be obtained by using $\bF$-predictable and $\C$-constrained
portfolios. In general, we have $\g^\kappa \leq \g^*$, and the
market will be balanced if this difference is not very large.

\begin{defn} \label{dfn: balanced econ}
For some filtration $\bF$ and constraints set $\C$, define the
continuous and increasing \textsl{loss-of-perfect-balance} process
$L$ via
\[
L_t \ \equiv \ L^{\bF, \C}_t \ := \ \int_0^t (\g^{* (\bF, \C)}_u -
\g_u^\kappa) \ud u,
\]
and write $\Omega = \Omega_b \cup \Omega_u$, where $\Omega_b  \equiv
\Omega^{\bF, \C}_b := \{ L^{\bF, \C}_\infty < + \infty \}$ are the
\textsl{balanced} outcomes and $\Omega_u \equiv \Omega^{\bF, \C}_u
:= \{ L^{\bF, \C}_\infty = + \infty \} = \Omega \setminus \Omega_b$
the \textsl{unbalanced} outcomes.

If $\prob[\Omega^{\bF, \C}_b] = 1$, the market described by
(\ref{eq: model}) will be called \textsl{balanced} with respect to
the probability $\prob$, the information flow $\bF$ and the
constraints $\C$.
\end{defn}

\begin{rem} \label{rem: max growth vs growth optimal}
If a predictable process $\rho$ that solves the maximization problem
\eqref{eq: growth optimal maxim} exists for all $(\prob \otimes
\Leb)$-almost every $(\omega, t) \in \Omega \times \Real_+$ and
$\rho$ satisfies the integrability conditions \eqref{eq: integrab
for portf} we then have $\g^* = \g^\rho$. This always happens if
$\C$ is contained in a fixed compact subset $K$ of $\Real^d$ for all
$(\omega, t) \in \Omega \times \Real_+$.

In general, a predictable process $\rho$ solving \eqref{eq: growth
optimal maxim} might not exist; even if it does exist, the
integrability conditions \eqref{eq: integrab for portf} might not be
fulfilled. It can be shown that $\rho$ exists and satisfies
\eqref{eq: integrab for portf} if and only if $L_t < \infty$ for all
$t \in \Real_+$, $\prob$-a.s. A thorough discussion of these points
is made in Karatzas and Kardaras \cite{KK: num and arbitrage}.
\end{rem}

Consider two filtrations $\bF$ and $\mathbf{G}$ such that $\bF^S
\subseteq \bF \subseteq \bG$; $\mathbf{G}$-perfect balance implies
$\mathbf{F}$-perfect balance. The same holds for simply balanced
markets.

\begin{prop} \label{prop: comparison of balanced}
Consider two pairs of filtrations and constraints $(\bF, \C)$ and
$(\bG, \K)$ with $\bF^S \subseteq \bF \subseteq \bG$ and $\C
\subseteq \K$. We then have $\g^{* (\bF, \C)} \leq \g^{* (\bG,
\K)}$; as a consequence, $L^{\bF, \C} \leq L^{\bG, \K}$ and
$\Omega_b^{\bG, \K} \subseteq \Omega_b^{\bF, \C}$.
\end{prop}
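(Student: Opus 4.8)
The plan is to reduce everything to the pointwise inequality $\g^{*(\bF,\C)}(\omega,t) \le \g^{*(\bG,\K)}(\omega,t)$ for $(\prob\otimes\Leb)$-almost every $(\omega,t)$, since once that is established the comparison of the loss processes follows by integrating the definition of $L$ in Definition \ref{dfn: balanced econ}, and then $\Omega_b^{\bG,\K} = \{L^{\bG,\K}_\infty < \infty\} \subseteq \{L^{\bF,\C}_\infty < \infty\} = \Omega_b^{\bF,\C}$ is immediate because $L^{\bF,\C}_\infty \le L^{\bG,\K}_\infty$. Note that $\g^\kappa$, the growth rate of the market portfolio, does not change between the two setups: by the third Remark of Section \ref{sec: ito process model}, passing from $\bF$ to the larger $\bG$ (both containing $\bF^S$) leaves the local covariation matrix $c$ unchanged and only possibly alters the drift $a$; but $\kappa$ is $\bF^S$-adapted, so its dynamics \eqref{eq: SDE for relcap, general} and hence its growth rate $\g^\kappa$ from \eqref{eq: growth rate} are intrinsic to $\bF^S$. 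So the only thing moving is the maximal-growth term $\g^*$, and it suffices to compare those.

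For the pointwise comparison, fix $(\omega,t)$ outside the relevant null set. By the definition \eqref{eq: growth optimal maxim}, $\g^{*(\bG,\K)}(\omega,t) = \sup_{\p \in \K(\omega,t)} \g^\p(\omega,t)$, where the growth-rate function \eqref{eq: growth rate} uses the $\bG$-drift $a^\bG$, and similarly $\g^{*(\bF,\C)}(\omega,t)$ uses the $\bF$-drift $a^\bF$ and the supremum over the smaller set $\C(\omega,t)$. Since $\C(\omega,t) \subseteq \K(\omega,t)$, enlarging the feasible set can only increase the supremum, provided the integrand function is the same; so the real content is that replacing $a^\bF$ by $a^\bG$ does not decrease the optimal growth. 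Here I would invoke the classical fact that for a portfolio $\pi$ that is additionally predictable with respect to the smaller filtration $\bF$, the growth rate $\g^\pi$ computed in $\bG$ agrees $(\prob\otimes\Leb)$-a.e. with the one computed in $\bF$: writing $S$ in the form \eqref{eq: model} under each filtration, the $\bF$-drift is the predictable projection onto $\bF$ of the $\bG$-drift (optional/dual-predictable projection of the finite-variation part), so for $\bF$-predictable $\pi$ one has $\inner{\pi}{a^\bF} = {}^{\bF\!}\!\big(\inner{\pi}{a^\bG}\big)$, and these coincide $(\prob\otimes\Leb)$-a.e.; the quadratic term $\inner{\pi}{c\pi}$ is filtration-independent. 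Consequently $\g^{*(\bF,\C)} = \sup_{\pi\in\Pi_\C, \ \bF\text{-pred.}} \g^\pi \le \sup_{\pi\in\Pi_\K,\ \bG\text{-pred.}} \g^\pi = \g^{*(\bG,\K)}$ a.e., using $\Pi_\C \subseteq \Pi_\K$ with the portfolios regarded as $\bG$-predictable.

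The main obstacle is the careful handling of the filtration change at the level of the drift: one must make sure that the supremum defining $\g^{*(\bF,\C)}$ is genuinely attained over $\bF$-predictable portfolios (so that the projection identity applies) and that the measurable-selection issues flagged in Remark \ref{rem: max growth vs growth optimal} do not obstruct comparing the two suprema — in particular one should argue at the level of $\g^\pi$ for each fixed admissible $\pi$ and then take suprema, rather than comparing putative optimizers $\rho^{\bF,\C}$ and $\rho^{\bG,\K}$ which may fail to exist. An alternative, cleaner route that sidesteps the drift-projection subtlety entirely is to use the supermartingale characterization from Remark \ref{rem: on numeraire property of growth optimal}: if $\rho$ is $(\bG,\K)$-growth-optimal then $V^\pi/V^\rho$ is a $\bG$-supermartingale for every $\pi \in \Pi_\K$, hence in particular for every $\bF$-predictable $\pi \in \Pi_\C$; taking $\bF$-predictable projections shows $V^\pi/V^\rho$ dominates an $\bF$-supermartingale with the same expectation, forcing $\inner{\pi-\rho}{a-r\fone-c\rho}\le 0$ to hold in the $\bF$-sense, which yields $\g^\pi \le \g^\rho$ a.e. and therefore $\g^{*(\bF,\C)} \le \g^\rho = \g^{*(\bG,\K)}$. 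I would present the first route for definiteness, remarking that the covariation matrix $c$ is unchanged so only the drift and the feasible set vary, and that both changes push $\g^*$ upward.
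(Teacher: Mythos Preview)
Your primary route has a genuine gap at the drift–projection step. You assert that for an $\bF$-predictable portfolio $\pi$ the growth rate computed under $\bF$ agrees $(\prob\otimes\Leb)$-a.e.\ with the one computed under $\bG$, via ``$\inner{\pi}{a^\bF} = {}^{\bF}\!\big(\inner{\pi}{a^\bG}\big)$''. This is not how filtration shrinkage works: writing $X=\int_0^\cdot (\ud S/S_-)$, the $\bG$-decomposition is $X=\int a^\bG\,\ud t + M^\bG$ with $M^\bG$ a $\bG$-local martingale, but $M^\bG$ need not be $\bF$-adapted (since $a^\bG$ is only $\bG$-predictable), and even when it is, it need not be an $\bF$-local martingale. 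The $\bF$-drift $a^\bF$ therefore absorbs part of $M^\bG$ and is \emph{not} simply an $\bF$-projection of $a^\bG$; correspondingly the pointwise identities you write for $\g^\pi$ fail. Since your whole comparison of the two suprema rests on this identification, the argument does not go through as written.

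The paper takes your ``alternative'' supermartingale route, but in a cleaner and more complete form, and crucially handles the existence issue you flag but do not resolve. First it truncates the constraint sets, $\C_n:=\C\cap[-n,n]^d$ and $\K_n:=\K\cap[-n,n]^d$, so that $\g^{*(\bF,\C_n)}\uparrow\g^{*(\bF,\C)}$ and likewise for $(\bG,\K)$, reducing to compactly constrained problems where the growth-optimal portfolios $\rho(\bF,\C_n)$ and $\rho(\bG,\K_n)$ exist (Remark~\ref{rem: max growth vs growth optimal}). Then, since $\rho(\bF,\C_n)$ is $\bG$-predictable and $\K_n$-valued, Remark~\ref{rem: on numeraire property of growth optimal} gives directly that $V^{\rho(\bF,\C_n)}/V^{\rho(\bG,\K_n)}$ is a positive $\bG$-supermartingale; taking logarithms yields a $\bG$-local supermartingale whose drift $\int_0^\cdot(\g^{\rho(\bF,\C_n)}_t-\g^{\rho(\bG,\K_n)}_t)\,\ud t$ must be nonincreasing, i.e.\ $\g^{*(\bF,\C_n)}\le\g^{*(\bG,\K_n)}$. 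No $\bF$-projections are invoked at any stage; everything happens inside $\bG$. Your alternative sketch points in this direction but still leans on ``taking $\bF$-predictable projections'' of $V^\pi/V^\rho$, which is unnecessary and reintroduces the same unclear step.
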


\begin{proof}

For all $n \in \Natural$ set $\C_n := \C \cap [-n, n]^d$ and $\K_n
:= \K \cap [-n, n]^d$. We then have that $\lim_{n \to \infty}
\uparrow \g^{* (\bF, \C_n)} = \g^{* (\bF, \C)}$ and $\lim_{n \to
\infty} \uparrow \g^{* (\bG, \K_n)} = \g^{* (\bG, \K)}$ and thus it
suffices to prove $\g^{* (\bF, \C)} \leq \g^{* (\bG, \K)}$ under the
assumption $\C \subseteq \K \subseteq K$ for some compact set $K$.
According to Remark \ref{rem: max growth vs growth optimal}, under
this assumption the growth-optimal portfolios $\rho(\bF, \C)$ and
$\rho(\bG, \K)$ exist and $\g^{* (\bF, \C)} = \g^{\rho (\bF, \C)}$
as well as $\g^{* (\bG, \K)} = \g^{\rho (\bG, \K)}$. From Remark
\ref{rem: on numeraire property of growth optimal} we know that
$V^{\rho (\bF, \C)} / V^{\rho (\bG, \K)}$ is a positive
supermartingale, which gives that $\log(V^{\rho (\bF, \C)} / V^{\rho
(\bG, \K)})$ is a local supermartingale; the drift of the last local
supermartingale --- which should be decreasing --- is $\int_0^\cdot
(\g_t^{\rho (\bF, \C)} - \g_t^{\rho (\bG, \K)}) \ud t$, which gives
us $\g^{* (\bF, \C)} \leq \g^{* (\bG, \K)}$ and completes the proof.
\end{proof}
\subsection{Some discussion} We contemplate slightly on balanced markets.

\subsubsection{Trivial example}
Perfectly balanced markets satisfy $L \equiv 0$, and are therefore
balanced.

\subsubsection{No bank}
Let us assume now that $\C = \{ x \in \Real^d \ | \ \inner{x}{\fone}
= 1\}$ --- we are allowed to invest in the risky companies, but
there is no bank (for us, at least).

We assume that $c$ is non-degenerate for $(\prob \otimes
\Leb)$-almost every $(\omega, t) \in \Omega \times \Real_+$; then,
the maximization problem \eqref{eq: growth optimal maxim} has a
solution $\rho$ that satisfies $c \rho = a - r \fone$ for some
\emph{unique} one-dimensional process $r$. On the $(\prob \otimes
\Leb)$-full measure subset of $\Omega \times \Real_+$ where $c$ is
non-singular it is clear that $\rho = c^{-1} (a - r \fone)$; using
$\inner{\rho}{\fone} = 1$ it is easy to see that
\begin{equation} \label{eq: interest rate implied}
r \ = \ \frac{\inner{a}{c^{-1} \fone} - 1}{\inner{\fone}{c^{-1}
\fone}}
\end{equation}
Now, straightforward computations give
\[
\g^* - \g^\kappa \ \equiv \ \g^\rho - \g^\kappa \ = \ \frac{1}{2}
\inner{\kappa - \rho}{ c (\kappa - \rho)} \ = \ \frac{1}{2} \big|
c^{-1/2} (c \kappa - a + r \fone) \big|^2.
\]
(One can also show the last relationship observing that $V^\kappa /
V^\rho$ is a local martingale and taking the logarithm.) Perfectly
balanced markets satisfy $c \kappa - a + r \fone = 0$ identically
with $r$ given by \eqref{eq: interest rate implied}; simply balanced
markets do not satisfy the last equation identically, but
approximately: $\int_0^\infty | c_t^{-1/2} (c_t \kappa_t - a_t + r_t
\fone) |^2 \ud t < \infty$.

\smallskip

We remark that on $\Omega_b$, $r$ earns the name of an interest rate
process, i.e., it is locally integrable. More specifically, it will
be shown below that for any random time $\tau$ we have $\int_0^\tau
|r_u| \ud u < \infty$ on $\{ \tau < \infty, \ L_\tau < \infty \}$.
Define $F_t := \int_0^t \inner{\kappa_u}{c_u \kappa_u} \ud u$; on
$\{ \tau < \infty \}$ we have $F_\tau < \infty$. The Cauchy-Schwartz
inequality gives
\[
\int_0^\tau | \inner{\kappa_u}{c_u \kappa_u} - \inner{\kappa_u}{a_u}
- r_u | \ud u = \int_0^\tau |\inner{\kappa_u}{c_u (\kappa_u -
\rho_u)} | \ud u \leq \sqrt{L_\tau F_\tau} < \infty,
\]
on $\{ \tau < \infty, \ L_\tau < \infty \}$. Since on $\{ \tau <
\infty \}$ we have $\int_0^\tau |\inner{\kappa_u}{c_u \kappa_u} -
\inner{\kappa_u}{a_u}| \ud u < \infty$; we conclude that
$\int_0^\tau | r_u | \ud u < \infty$ on $\{ \tau < \infty, \ L_\tau
< \infty \}$, as proclaimed.

\subsubsection{Interest rate revisited}

Continuing the above discussion, where no bank is present, suppose
that we wish to introduce an interest rate process $\tilde{r}$ in
such a way as to keep the market balanced
--- at least on the event that it was balanced before. In the case of
perfectly balanced market, $\tilde{r} \equiv r$ must hold
--- here, we shall see that we have this last equality holding
\emph{approximately}.

We still assume that $c$ is non-singular on a set of full $(\prob
\otimes \Leb)$-measure (which is very reasonable to justify the
introduction of a bank). A solution $\tilde{\rho}$ of the
optimization problem \eqref{eq: growth optimal maxim} in the market
augmented with the bank exists, and $c \tilde{\rho} = a - \tilde{r}
\fone$. Straightforward, but somewhat lengthy, computations show
that
\[
\int_0^\infty (\g_t^{\tilde{\rho}} - \g_t^\kappa) \ud t \ = \
\int_0^\infty (\g_t^{\rho} - \g_t^\kappa) \ud t + \frac{1}{2}
\int_0^\infty \inner{\fone}{c_t^{-1} \fone} |\tilde{r_t} - r_t|^2
\ud t,
\]
where $\rho := c^{-1} (a - r \fone)$ and $r$ is given by \eqref{eq:
interest rate implied}. Introducing a bank that offers interest rate
$\tilde{r}$ keeps the market balanced if and only if $\int_0^\infty
\inner{\fone}{c_t^{-1} \fone} |\tilde{r}_t - r_t|^2 \ud t < \infty$,
which can be seen as an approximate equality between $r$ and
$\tilde{r}$.

\subsubsection{Equivalent Martingale Measures} \label{rem: on EMM}
We now delve into the relationship between balanced markets and the
existence of a probability $\qprob \sim \prob$ that makes the
relative capitalizations $\kappa^i$ $\qprob$-martingales. We call
such a probability $\qprob$ an equivalent martingale measure (EMM),
although it does not apply directly to the \emph{actual}, rather to
the \emph{relative} capitalizations. The concept of a balanced
market is closely related, but weaker than the existence of an EMM.
It is not hard to see why it is weaker: assume the existence of an
EMM $\qprob$ and denote by $Z$ the density process, i.e., $Z_t :=
(\ud \qprob / \ud \prob) |_{\F_t}$. Since $\qprob \sim \prob$, we
have $\prob[Z_\infty] > 0$. The Kunita-Watanabe decomposition
implies
\[
Z_t = E_t N_t, \textrm{ with } E_t := \exp \pare{\int_0^t
\inner{h_u}{\ud M_u} - \frac{1}{2} \int_0^t \inner{h_u}{c_u h_u} \ud
u}
\]
where $h$ is an $d$-dimensional predictable process and the strictly
positive local martingale $N$ is strongly orthogonal to $M$. The
integrand $h$ need not be unique, but the local martingale
$\int_0^\cdot \inner{h_u}{\ud M_u}$ is. Since $\kappa$ has to be a
$\qprob$-martingale, one can show that we can choose $h = \rho -
\kappa$, where $\rho$ is the growth-optimal portfolio, that
\emph{must} exist. Since $Z_\infty
> 0$ and $N_\infty < + \infty$, $\prob$-a.s. we also have that
$E_\infty > 0$, $\prob$-a.s.; in view of Lemma \ref{lem: asymptotic
pos loc mart, cont} this is equivalent to saying that the quadratic
variation of the local martingale $\int_0^\cdot \inner{h_u}{\ud
M_u}$ is finite at infinity --- but this is exactly $L_\infty$; thus
the existence of an EMM implies that the market is balanced.

In section \ref{sec: example} we shall see by example that the
notion of a balanced market is actually \emph{strictly} weaker than
existence of an EMM $\qprob$.

\subsection{Balanced markets and efficiency}

The chances for an agent to do well relatively to the overall wealth
are very different depending on which of the events $\Omega_b$ and
$\Omega_u$ is being considered. The next result gives a
characterization of $\Omega_u$ in terms of beating the whole market.

\begin{thm} \label{thm: charact of weak eff}
We consider the model \eqref{eq: model} valid under some filtration
$\bF \supseteq \bF^S$ and the constraints set $\C =
\overline{\Delta}^{d-1}$.

\noindent $\bullet$ On $\Omega_b$, and for any portfolio $\pi \in
\Pi_\C$ the limit of the relative wealth process $\lim_{t \to
\infty} (V^\pi_t / V^\kappa_t)$ exists and is $\Real_+$-valued. The
probability of beating the whole market for ever-increasing levels
converges to zero uniformly among all portfolios:
\begin{equation} \label{eq: bddness in prob on Omega_e}
\lim_{m \to \infty} \downarrow \sup_{\pi \in \Pi_\C} \prob \Big[
\sup_{t \in \Real_+} \Big( \frac{V^\pi_t}{V^\kappa_t} \Big) > m \
\big{|} \ \Omega_b \Big] = 0.
\end{equation}

\noindent $\bullet$ Further, $\Omega_b$ is the maximal set that
(\ref{eq: bddness in prob on Omega_e}) holds: there exists $\rho \in
\Pi_\C$ such that $\lim_{t \to \infty} ( V^{\rho}_t / V^\kappa_t )$
is $(0, \infty]$-valued, and $\Omega_u = \{ \lim_{t \to \infty} (
V^{\rho}_t / V^\kappa_t ) = \infty \}$.
\end{thm}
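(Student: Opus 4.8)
The plan is to use the growth-optimal portfolio as a numeraire. Since $\C = \overline{\Delta}^{d-1}$ is compact, Remark~\ref{rem: max growth vs growth optimal} supplies a predictable growth-optimal $\rho \in \Pi_\C$ with $\g^* = \g^\rho$; note that on $\overline{\Delta}^{d-1}$ the interest rate disappears from \eqref{eq: wealth process} (because $\inner{\pi}{\fone}=1$), so nothing below depends on $r$. By Remark~\ref{rem: on numeraire property of growth optimal}, $V^\pi/V^\rho$ is a positive supermartingale for every $\pi \in \Pi_\C$; applying It\^o's formula to \eqref{eq: wealth process} and using $\g^\rho = \g^*$,
\[
\log \frac{V^\kappa_t}{V^\rho_t} \ = \ \widetilde M_t - L_t, \qquad \widetilde M_t := \int_0^t \inner{\kappa_u - \rho_u}{\ud M_u}.
\]
Moreover $\inner{\kappa - \rho}{a - r\fone - c\rho} = \inner{\kappa - \rho}{a - c\rho} \le 0$ (the first-order/supermartingale condition from Remark~\ref{rem: on numeraire property of growth optimal}), so the continuous local martingale $\widetilde M$ satisfies $\langle \widetilde M\rangle_t = \int_0^t \inner{\kappa_u - \rho_u}{c_u(\kappa_u - \rho_u)}\ud u \le 2L_t$, with equality while $\rho$ stays in the open simplex.

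Put $N := V^\kappa/V^\rho$, a strictly positive continuous supermartingale, so $N_t \to N_\infty \in [0,\infty)$ $\prob$-a.s. The crux is the identity of events $\{N_\infty > 0\} = \Omega_b$ up to null sets. On $\Omega_b$ one has $\langle\widetilde M\rangle_\infty \le 2L_\infty < \infty$, hence (Lemma~\ref{lem: asymptotic pos loc mart, cont}) $\widetilde M_t$ converges and $\log N_t \to \widetilde M_\infty - L_\infty \in \Real$. On $\Omega_u = \{L_\infty = +\infty\}$: if $\langle\widetilde M\rangle_\infty < \infty$, then $\widetilde M_t$ converges while $L_t \to \infty$, so $\log N_t \to -\infty$; if $\langle\widetilde M\rangle_\infty = \infty$, a time change writes $\widetilde M_t = B_{\langle\widetilde M\rangle_t}$ for a Brownian motion $B$, and $L_t \ge \frac{1}{2}\langle\widetilde M\rangle_t$ forces $\log N_t \le B_{\langle\widetilde M\rangle_t} - \frac{1}{2}\langle\widetilde M\rangle_t \to -\infty$. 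Thus $N_\infty = 0$ on $\Omega_u$ and $N_\infty \in (0,\infty)$ on $\Omega_b$.

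The first bullet follows at once: for $\pi \in \Pi_\C$ the positive supermartingale $V^\pi/V^\rho$ converges a.s.\ to a finite limit, so on $\Omega_b$ (where $N_\infty > 0$) the ratio $V^\pi_t/V^\kappa_t = (V^\pi_t/V^\rho_t)/N_t$ converges to a $\Real_+$-valued limit. For \eqref{eq: bddness in prob on Omega_e}, Doob's maximal inequality applied to the positive supermartingale $V^\pi/V^\rho$ (which starts at $1$) gives $\prob[\sup_{t\in\Real_+} V^\pi_t/V^\rho_t > \lambda] \le 1/\lambda$ for all $\lambda>0$, \emph{uniformly} in $\pi\in\Pi_\C$. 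On $\{\inf_{t\in\Real_+} N_t > \delta\}$ one has $\sup_{t\in\Real_+} V^\pi_t/V^\kappa_t \le \delta^{-1}\sup_{t\in\Real_+} V^\pi_t/V^\rho_t$, whence (taking $\prob[\Omega_b]>0$, the statement being vacuous otherwise)
\[
\sup_{\pi\in\Pi_\C}\prob\Big[\sup_{t\in\Real_+}\frac{V^\pi_t}{V^\kappa_t} > m \ \Big| \ \Omega_b\Big] \ \le \ \prob\big[\inf_{t\in\Real_+} N_t \le \delta \ \big| \ \Omega_b\big] + \frac{1}{m\,\delta\,\prob[\Omega_b]}.
\]
Since $N$ is continuous, strictly positive and convergent, $\Omega_b = \{\inf_{t\in\Real_+} N_t > 0\}$ a.s., so the first term $\downarrow 0$ as $\delta\downarrow 0$; picking $\delta$ small and then $m$ large makes the right side arbitrarily small, and it is nonincreasing in $m$. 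The second bullet is witnessed by $\rho$ itself: $V^\rho_t/V^\kappa_t = 1/N_t \to 1/N_\infty \in (0,\infty]$, and $\{1/N_\infty = +\infty\} = \{N_\infty = 0\} = \Omega_u$.

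The main obstacle is the asymptotic dichotomy $\{N_\infty>0\} = \Omega_b$, and within it the case where $\rho$ touches $\partial\overline{\Delta}^{d-1}$: there $V^\kappa/V^\rho$ is only a \emph{strict} supermartingale, so the bare positive-local-martingale lemma does not apply directly and one needs the time-change estimate --- which is exactly where the comparison $\langle\widetilde M\rangle \le 2L$ enters. The second delicate point is that \eqref{eq: bddness in prob on Omega_e} must be \emph{uniform} in $\pi$; this works only because the maximal inequality for $V^\pi/V^\rho$ is insensitive to $\pi$.
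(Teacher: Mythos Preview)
Your proof is correct and follows essentially the same route as the paper: introduce the growth-optimal $\rho$, use the supermartingale property and maximal inequality for $V^\pi/V^\rho$, derive $\log(V^\kappa/V^\rho)=\widetilde M - L$ with $\langle\widetilde M\rangle \le 2L$, and factor $V^\pi/V^\kappa$ through $V^\rho$. The only cosmetic difference is on $\Omega_u$: the paper invokes its strong-law Lemma~\ref{prop: strong law of large numbers cont mart} (obtaining the sharper $\log(V^\rho_t/V^\kappa_t)/L_t\to 1$), whereas you reach $\log N_t\to -\infty$ via a DDS time-change --- both arguments yield the stated conclusion.
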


\begin{proof}

Consider the growth optimal portfolio $\rho$ in the class $\Pi_\C$
--- since $\C$ is a constant compact subset of $\Real^d$ this
certainly exists. Take now any portfolio $\pi \in \Pi_\C$; Remark
\ref{rem: on numeraire property of growth optimal} gives that the
relative wealth process $V^\pi / V^\rho$ is a positive
supermartingale. Then, for any $l > 0$ we have $\prob[\sup_{t \in
\Real_+} (V^\pi_t / V^\rho_t) > l] \leq l^{-1}$, i.e., the
collection $\{\sup_{t \in \Real} (V^\pi_t / V^\rho_t) \}_{\pi \in
\Pi_\C}$ is bounded in probability. Further, It\^o's formula for the
semimartingale $\log(V^\kappa / V^\rho)$ reads
\begin{equation} \label{eq: log rel wealth of market to growth opt}
\log \frac{V^\kappa_t}{V^\rho_t} \ = \ - L_t + \int_0^t
\inner{\kappa_u - \rho_u}{ \ud M_u}.
\end{equation}
Observe then that on $\Omega_b$ both the finite-variation part and
the quadratic variation of the local martingale part of the
semimartingale $\log (V^\kappa / V^\rho)$ are finite all the way to
infinity, thus $\inf_{t \in \Real_+} (V^\kappa_t / V^\rho_t) \in (0,
+ \infty)$. Writing $V^\pi / V^\kappa = (V^\pi / V^\rho) (V^\rho /
V^\kappa)$ for all $\pi \in \Pi_\C$, we see that the collection $\{
\sup_{t \in \Real_+} (V^\pi_t / V^\kappa_t) \}_{\pi \in \Pi_\C}$ is
bounded in probability on $\Omega_b$, which is exactly the first
claim (\ref{eq: bddness in prob on Omega_e}).

The fact that $L$ dominates twice the quadratic variation process of
the local martingale $\int_0^\cdot \inner{\kappa_u - \rho_u}{ \ud
M_u}$ enables one to use the strong law of large numbers (Lemma
\ref{prop: strong law of large numbers cont mart} of the Appendix)
in \eqref{eq: log rel wealth of market to growth opt} and show that
we have
\[
\lim_{t \to \infty} \frac{\log(V^\rho_t/V^\kappa_t)}{L_t} = 1,
\]
on $\Omega_u = \{ L_\infty = \infty\}$, which proves the second
claim.
\end{proof}

\begin{rem} \label{reb: extension of effic thm}
The assumption $\C = \overline{\Delta}^{d-1}$ in Theorem \ref{thm:
charact of weak eff} is being made to ease the proof, and also
because it will be the only case we need in the sequel. This
assumption can be dropped; Theorem \ref{thm: charact of weak eff}
still holds, with some possible slight changes which we now
describe.

The essence of the assumption $\C = \overline{\Delta}^{d-1}$ was to
make sure that the growth-optimal portfolio $\rho$ exists in the
class $\Pi_\C$; thus, the proof remains valid whenever $\C$ is
contained in a compact subset of $\Real^d$. In the general case, one
might not be able to use $\rho$ directly (since it might not even
exist), but rather a subsequence of $(\rho_n)_{n \in \Natural}$
where $\rho_n$ defined to be the $\C_n$-constrained growth-optimal
portfolio where $\C_n := \C \cap [-n, n]^d$ and replace the second
bullet in Theorem \ref{thm: charact of weak eff} by

\noindent $\bullet$ Further, $\Omega_b$ is the maximal set that
(\ref{eq: bddness in prob on Omega_e}) holds: if $\prob[\Omega_u] >
0$ one can find a sequence of portfolios $(\rho_n)_{n \in \Natural}$
such that $\lim_{t \to \infty}
\pare{V^{\rho_n}_t / V^\kappa_t}$ exists and
\[
\lim_{n \to \infty} \prob \Big[ \lim_{t \to \infty}
\frac{V^{\rho_n}_t}{V^\kappa_t}
> n \ \big{|} \ \Omega_u \Big] = 1.
\]

\end{rem}

\section{Segregation and Limiting Capital Distribution of Balanced Markets} \label{sec: limit distr of balanced econ}

Here, we describe the limiting behavior of the market on the set of
balanced outcomes $\Omega_b$. We take the latter event to be as
large as possible, which by Proposition \ref{prop: comparison of
balanced} means that for this section we consider the case where the
filtration is $\bF^S$ and $\C = \overline{\Delta}^{d - 1}$. By
Theorem \ref{thm: charact of weak eff}, on the event $\Omega_u$ an
investor with minimal information can construct an all-long
portfolio that can beat the market unconditionally; to keep our
sanity, it is best to assume that the market is balanced.

\subsection{Limiting capital distribution} The following result
is a simple corollary of Theorem \ref{thm: charact of weak eff}.
(\emph{All} set-inclusions appearing from now on are valid modulo
$\prob$.)

\begin{prop} \label{prop: lim cap dist exists for balanced}
$\Omega_b \ \subseteq \ \{ \kappa_\infty := \lim_{t \to \infty}
\kappa_t \textrm{ exists} \}$.
\end{prop}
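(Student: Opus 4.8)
The plan is to read off the coordinates of $\kappa$ as relative wealth processes and then quote Theorem \ref{thm: charact of weak eff} directly. First I would fix $i \in \{1, \ldots, d\}$ and consider the constant "all-in-company-$i$" portfolio $\pi = \e_i$. Since $\e_i \in \overline{\Delta}^{d-1} \subseteq \C$ and $\e_i$ is bounded, the integrability condition \eqref{eq: integrab for portf} reduces to $\int_0^t (|a^i_u| + c^{ii}_u)\,\ud u < \infty$ for all $t$, which is part of the standing model assumptions; hence $\e_i \in \Pi_\C$. From \eqref{eq: wealth process} one computes $V^{\e_i} = S^i / S^i_0$, and we already know $V^\kappa = \inner{S}{\fone}/\inner{S_0}{\fone}$, so that
\[
\kappa^i_t \ = \ \frac{S^i_t}{\inner{S_t}{\fone}} \ = \ \kappa^i_0 \, \frac{V^{\e_i}_t}{V^\kappa_t}, \qquad t \in \Real_+ .
\]

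Next I would invoke the first bullet of Theorem \ref{thm: charact of weak eff} (applicable since $\C = \overline{\Delta}^{d-1}$): on $\Omega_b$ the relative wealth process $V^{\e_i}/V^\kappa$ converges as $t \to \infty$ to an $\Real_+$-valued, in particular finite, limit. Multiplying by the $\F_0$-measurable quantity $\kappa^i_0$ shows that $\lim_{t \to \infty} \kappa^i_t$ exists and is finite on $\Omega_b$. Carrying this out for every $i = 1, \ldots, d$ gives that $\kappa_\infty := \lim_{t\to\infty}\kappa_t$ exists on $\Omega_b$; and since each $\kappa^i_t \in (0,1)$ with $\sum_{i=1}^d \kappa^i_t = 1$ for all $t$, the limit automatically lies in the closed simplex $\overline{\Delta}^{d-1}$. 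This is exactly the asserted inclusion $\Omega_b \subseteq \{\kappa_\infty \textrm{ exists}\}$ (modulo $\prob$).

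I do not expect a genuine obstacle here: the statement is a corollary of Theorem \ref{thm: charact of weak eff} once one observes that the coordinates of the market portfolio are, up to the multiplicative constants $\kappa^i_0$, relative wealth processes of admissible portfolios. The only items needing a line of justification are the membership $\e_i \in \Pi_\C$ (immediate from $\C = \overline{\Delta}^{d-1}$ together with the model's integrability hypotheses on $a^i$ and $c^{ii}$) and the elementary identity $V^{\e_i} = S^i/S^i_0$ coming from \eqref{eq: wealth process}; both are routine.
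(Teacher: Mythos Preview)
Your proposal is correct and follows exactly the paper's approach: write $\kappa^i = \kappa^i_0\, (V^{\e_i}/V^\kappa)$ and invoke the first bullet of Theorem~\ref{thm: charact of weak eff} with $\pi = \e_i$. You have simply made explicit the routine verifications (membership $\e_i \in \Pi_\C$, the identity $V^{\e_i}=S^i/S^i_0$) that the paper's one-line proof leaves to the reader.
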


\proof Write $\kappa^i = \kappa^i_0 (V^{\e_i}/ V^\kappa)$ and use
the first claim of Theorem \ref{thm: charact of weak eff} with $\pi
= \e_i$. \qed

\medskip

Thus, we know that on $\Omega_b$ there exists a limiting capital
distribution in a very strong sense: there is almost sure
convergence of the relative capitalizations vector. The next task is
to identify this distribution.

\subsection{Sector equivalence and segregation}
We give below a definition of some sort of distance between two
companies. To introduce the definition and get an idea of what it
means, remember that if $\pi_1$ and $\pi_2$ are two portfolios, the
drift of the log-wealth process $\log(V^{\pi_1}/V^{\pi_2})$ is
$\int_0^\cdot \g_t^{\pi_1 | \pi_2} \ud t$, where $\g^{\pi_1 | \pi_2}
:= \g^{\pi_1} - \g^{\pi_2}$, and that its quadratic variation is
$\int_0^\cdot c_t^{\pi_1 | \pi_2} \ud t$ where $c^{\pi_1 | \pi_2} :=
\inner{\pi_2 - \pi_1}{c (\pi_2 - \pi_1)}$. In the case where the
portfolios are unit vectors $\e_i$, $\e_j$ for some $1 \leq i,j \leq
d$ we write $\g^{i | j}$ and $c^{i | j}$ for $\g^{\e_i | \e_j}$ and
$c^{\e_i | \e_j}$ respectively.

\begin{defn} \label{dfn: dist of sect con}
Say that two companies $i$ and $j$ in the market are
\textsl{equivalent} (on the outcome $\omega$) and denote $i \sim j$
(more precisely $i \sim_\omega j$) if their total distance
\begin{equation} \label{eq: dist of companys}
d^{i|j} := \int_0^\infty \big( |\g_t^{i|j}| + \frac{1}{2} c_t^{i|j}
\big) \ud t,
\end{equation}
satisfies $d^{i | j} (\omega) < \infty$, and write $i \nsim j$ ($i
\nsim_\omega j$ is more precise) if $d^{i|j} (\omega) = \infty$.

The \textsl{segregation} event is $\Sigma := \{  i \nsim j, \textrm{
for all pairs of companies } (i,j) \}$; if $\prob[\Sigma] = 1$, the
market will be called \textsl{segregated}.
\end{defn}

Market segregation is conceptually very natural. Indeed, if two
companies satisfy $i \sim_\omega j$ for some outcome $\omega \in
\Omega$, then the total quadratic variation of the difference of
their returns all the way to infinity is finite; in this sense, the
total cumulative uncertainty (up to infinity) that they bear is
\emph{very} comparable. The same happens for their growth rates, as
\eqref{eq: dist of companys} implies. In this case they should
really be viewed and modeled as the same entity of the market. To
really speak of ``different'' companies, they must have some
different uncertainty \emph{or} growth characteristics; this makes
Definition \ref{dfn: dist of sect con} perfectly reasonable.

\begin{rem} \label{rem: equiv of portfolios}
An equivalence relation between portfolios $\pi_1$ and $\pi_2$ can
similarly be defined, by postulating that $\pi_1 \sim_\omega \pi_2$
if $\int_0^\infty (|\g_t^{\pi_1|\pi_2}| + c_t^{\pi_1|\pi_2}/2) \ud t
< \infty$. Then, we can write $\Omega_b = \{ \rho \sim \kappa \}$.
To wit, remember that $\Omega_b = \{ \int_0^\infty \g_t^{ \rho |
\kappa } \ud t < \infty \}$, so we certainly have $\{ \rho \sim
\kappa \} \subseteq \Omega_b$. On the other hand, since $V^\kappa /
V^\rho$ is a supermartingale, it is easy to see that we have $2
\g^{\rho | \kappa} \geq c^{\rho | \kappa}$, which gives $\{
\int_0^\infty \g_t^{ \rho | \kappa } \ud t < \infty \} \subseteq \{
\int_0^\infty c_t^{ \rho | \kappa } \ud t < \infty \}$, and thus
$\Omega_b = \{ \rho \sim \kappa \}$.
\end{rem}

It should be clear that
\begin{equation} \label{eq: equiv implies limits}
\{ i \sim j \} \, \subseteq \, \Big\{ \lim_{t \to \infty} \Big( \log
\frac{\kappa^i_t}{\kappa^j_t} \Big) \textrm{ exists} \Big\} \, = \,
\Big\{ \lim_{t \to \infty}  \frac{\kappa^i_t}{\kappa^j_t} \textrm{
exists and is strictly positive} \Big\}.
\end{equation}

\begin{rem} \label{rem: justifying equivalence appel}
The relationship $\sim$ of Definition \ref{dfn: dist of sect con} is
an equivalence relationship. Indeed, suppose that $i$, $j$ and $k$
are three companies. That $i \sim i$ is evident, since $\g^{i|i} =
c^{i | i} = 0$; also, $i \sim j\Leftrightarrow j \sim i$ follows
because $|\g^{i|j}|$ and $c^{i | j}$ are symmetric in $(i, j)$.
Finally, if $i \sim j$ and $j \sim k$, the triangle inequality
$|\g^{i | k}| \leq |\g^{i | j}| + |\g^{j | k}|$ gives $\int_0^\infty
|\g_t^{i | k}| \ud t < \infty$. By It\^o's formula,
\[
\log \Big( \frac{\kappa_t^i}{\kappa_t^k} \Big) \ =  \ \log \Big(
\frac{\kappa_0^i}{\kappa_0^k} \Big) + \int_0^t \g^{i|k}_u \ud u +
\inner{\e_i - \e_k}{M_t}.
\]
Then, $\inner{\e_i - \e_k}{M} = \log(\kappa^i / \kappa^k) -
\log(\kappa_0^i / \kappa_0^k) - \int_0^\cdot |\g_t^{i | k}| \ud t$
is a local martingale. We have $\{ i \sim j \} \cap \{ j \sim k \}
\subseteq \{ \lim_{t \to \infty} \log (\kappa^i_t / \kappa^k_t)
\textrm{ exists} \}$ from \eqref{eq: equiv implies limits}, hence
$\inner{\e_i - \e_k}{M}$ has a finite limit at infinity on $\{ i
\sim j \} \cap \{ j \sim k \}$, which means that its quadratic
variation up to infinity has to be finite on the latter event, i.e.,
$\int_0^\infty |c_t^{i|k}| \ud t < \infty$ on $\{ i \sim j \} \cap
\{ j \sim k \}$, and the claim is proved. The same holds for the
relationship $\sim$ described in Remark \ref{rem: equiv of
portfolios} above for portfolios.
\end{rem}

On the event $\{ \kappa_\infty := \lim_{t \to \infty} \kappa_t
\textrm{ exists} \} \cap \{ i \sim j \}$ we have $\kappa^i_\infty =
0 \Leftrightarrow \kappa^j_\infty = 0$, and thus also
$\kappa^i_\infty
> 0 \Leftrightarrow \kappa^j_\infty > 0$; this is trivial in view of \eqref{eq: equiv implies
limits}. A somewhat surprising partial converse to this last
observation is given now.

\begin{lem} \label{lem: key for divers failure}
For any pair $(i, \, j)$, we have $\Omega_b \cap \{ \kappa^i_\infty
> 0, \ \kappa^j_\infty
> 0 \} \subseteq \{ i \sim j \}$.
\end{lem}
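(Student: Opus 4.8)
The plan is to compare both unit-vector portfolios $\e_i$ and $\e_j$ against the growth-optimal portfolio $\rho \in \Pi_\C$, which exists and satisfies $\g^* = \g^\rho$ because $\C = \overline{\Delta}^{d-1}$ is a fixed compact set (Remark \ref{rem: max growth vs growth optimal}). The goal is to show that on the event $E := \Omega_b \cap \{ \kappa^i_\infty > 0, \ \kappa^j_\infty > 0 \}$ we have $\e_i \sim \rho$ and $\e_j \sim \rho$ in the portfolio sense of Remark \ref{rem: equiv of portfolios}; transitivity of $\sim$ for portfolios (Remark \ref{rem: justifying equivalence appel}) then forces $\e_i \sim \e_j$ on $E$, which is exactly the assertion $E \subseteq \{ i \sim j \}$.

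To carry this out, note first that $\e_i \in \Pi_\C$ (it lies in $\overline{\Delta}^{d-1}$ and $\int_0^t (|a^i_u| + c^{ii}_u)\,\ud u < \infty$), so by Remark \ref{rem: on numeraire property of growth optimal} the process $V^{\e_i}/V^\rho$ is a strictly positive supermartingale, and It\^o's formula applied to its logarithm gives
\[
\log \frac{V^{\e_i}_t}{V^\rho_t} \ = \ \int_0^t \g^{\e_i | \rho}_u \, \ud u + \int_0^t \inner{\e_i - \rho_u}{\ud M_u},
\]
whose finite-variation part is \emph{decreasing} since $\g^{\e_i | \rho} = \g^{\e_i} - \g^* \leq 0$, and whose continuous local-martingale part has quadratic variation $\int_0^\cdot c^{\e_i | \rho}_u\,\ud u \geq 0$. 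Next I would show that on $E$ the left-hand side converges to a \emph{finite} limit: writing it as $\log(V^{\e_i}/V^\kappa) + \log(V^\kappa/V^\rho)$ and recalling $V^{\e_i}/V^\kappa = \kappa^i/\kappa^i_0$ (as in the proof of Proposition \ref{prop: lim cap dist exists for balanced}), the first summand tends to $\log(\kappa^i_\infty/\kappa^i_0) \in \Real$ on $\{ \kappa^i_\infty > 0 \}$, while the second tends to a finite limit on $\Omega_b$ — this was established inside the proof of Theorem \ref{thm: charact of weak eff}, where both the finite-variation and quadratic-variation parts of $\log(V^\kappa/V^\rho)$ are shown finite up to infinity. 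Now comes the key dichotomy: a decreasing process converges either in $\Real$ or to $-\infty$; if $\int_0^\infty \g^{\e_i | \rho}_u\,\ud u = -\infty$ on some positive-probability subset of $E$, then there the continuous local martingale $\int_0^\cdot \inner{\e_i - \rho_u}{\ud M_u}$ would have to tend to $+\infty$, which is impossible — a continuous local martingale either converges in $\Real$ or oscillates between $\pm\infty$, which is precisely the kind of statement recorded in Lemma \ref{lem: asymptotic pos loc mart, cont} / Lemma \ref{prop: strong law of large numbers cont mart}. Hence $\int_0^\infty |\g^{\e_i | \rho}_u|\,\ud u < \infty$ on $E$, and then the local-martingale part converges in $\Real$, forcing $\int_0^\infty c^{\e_i | \rho}_u\,\ud u < \infty$ on $E$; together these say $\e_i \sim \rho$ on $E$. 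The identical argument with $j$ in place of $i$ gives $\e_j \sim \rho$ on $E$.

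Finally I would invoke transitivity: on $E$ we have $\e_i \sim \rho$ and $\rho \sim \e_j$, and for portfolios the triangle inequality $|\g^{i | j}| \leq |\g^{\e_i | \rho}| + |\g^{\rho | \e_j}|$ together with $c^{i | j} \leq 2 c^{\e_i | \rho} + 2 c^{\rho | \e_j}$ (expand $c^{\e_i|\e_j}$ as $|c^{1/2}(\e_i-\e_j)|^2$ and apply the triangle inequality for the Euclidean norm) yields $d^{i|j} = \int_0^\infty (|\g^{i|j}_u| + \frac{1}{2} c^{i|j}_u)\,\ud u < \infty$ on $E$, i.e.\ $E \subseteq \{ i \sim j \}$, as claimed. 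The step I expect to be the main obstacle is precisely ruling out a drift running off to $-\infty$: it is \emph{not} enough that $V^{\e_i}/V^\rho$ converges as a positive supermartingale, since its limit could a priori be $0$, so one genuinely needs the finiteness of $\log(V^\kappa/V^\rho)$ on $\Omega_b$ together with the oscillation property of continuous local martingales to pin the limit strictly away from $0$ and extract the two integrability conclusions at once.
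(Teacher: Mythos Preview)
Your argument is correct and follows essentially the same route as the paper: show that on $\Omega_b \cap \{\kappa^i_\infty > 0\}$ the ratio $V^{\e_i}/V^\rho$ has a strictly positive limit, extract finiteness of the relevant integrals, and then pass to $i \sim j$ by a triangle inequality. The only cosmetic difference is in the last step: you bound $c^{i|j}$ through $\rho$ via $c^{i|j} \leq 2c^{\e_i|\rho} + 2c^{\rho|\e_j}$, whereas the paper obtains $\int_0^\infty c^{i|j}_t\,\ud t < \infty$ directly from the convergence of $\log(\kappa^i/\kappa^j)$ (both numerator and denominator having positive limits) together with the already-established finiteness of $\int_0^\infty |\g^{i|j}_t|\,\ud t$, exactly as in Remark \ref{rem: justifying equivalence appel}.
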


\proof Since $V^\kappa/ V^\rho$ has a strictly positive limit at
infinity on $\Omega_b$, we get that the local martingale $V^{\e^i} /
V^\rho$ has a strictly positive limit at infinity on $\Omega_b \cap
\{ \kappa^i_\infty > 0 \}$. According to Lemma \ref{lem: asymptotic
pos loc mart, cont}, this means that $\int_0^\infty |\g_t^{i |
\rho}| \ud t = \int_0^\infty \g_t^{\rho | i} \ud t < \infty$. Then,
on $\Omega_b \cap \{ \kappa^i_\infty > 0, \ \kappa^j_\infty > 0 \}$
we have both $\int_0^\infty |\g_t^{i | \rho}| \ud t < \infty$ and
$\int_0^\infty |\g_t^{j | \rho}| \ud t < \infty$; since $|\g^{i|j}|
\leq |\g^{i|\rho}| + |\g^{j|\rho}|$, we get $\int_0^\infty |\g_t^{i
| j}| \ud t < \infty$. Now, the fact that $\lim_{t \to \infty}
\log(\kappa^i_t / \kappa^j_t)$ exists on $\{ \kappa^i_\infty
> 0, \ \kappa^j_\infty > 0 \}$ allows us to proceed as in Remark \ref{rem: justifying equivalence
appel} and show that $\int_0^\infty c_t^{i | j} \ud t < \infty$. We
conclude that $i \sim j$ on $\Omega_b \cap \{ \kappa^i_\infty > 0, \
\kappa^j_\infty > 0 \}$. \qed

\subsection{One company takes all} Now comes the main
result of this section.
\begin{thm} \label{thm: one company takes all}
$\Omega_b \cap \Sigma \ \subseteq \ \big \{ \kappa_\infty \in \{
\e_1, \ldots, \e_d \} \big \}$. In particular, in a balanced and
segregated market, $\kappa_\infty$ exists $\prob$-a.s. and is equal
to a unit vector.
\end{thm}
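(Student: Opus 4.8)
The plan is to obtain this essentially for free from Lemma~\ref{lem: key for divers failure}, Proposition~\ref{prop: lim cap dist exists for balanced}, and the very definition of the segregation event $\Sigma$. The guiding idea is that on $\Omega_b \cap \Sigma$ no two distinct companies can simultaneously retain a strictly positive share of the total market in the limit, so all the mass must collapse onto a single coordinate.

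First I would record that, by Proposition~\ref{prop: lim cap dist exists for balanced}, the limit $\kappa_\infty := \lim_{t \to \infty} \kappa_t$ exists on $\Omega_b$; since $\kappa_t \in \Delta^{d-1}$ for every $t$ and the closed simplex $\overline{\Delta}^{d-1}$ is closed, the limit satisfies $\kappa_\infty \in \overline{\Delta}^{d-1}$, i.e. $\kappa^i_\infty \geq 0$ for all $i$ and $\sum_{i=1}^d \kappa^i_\infty = 1$. Thus it only remains to show that, on $\Omega_b \cap \Sigma$, at most one coordinate $\kappa^i_\infty$ is strictly positive.

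To this end, fix a pair $i \neq j$. Lemma~\ref{lem: key for divers failure} gives $\Omega_b \cap \{ \kappa^i_\infty > 0, \ \kappa^j_\infty > 0 \} \subseteq \{ i \sim j \}$, while on $\Sigma$ we have $i \nsim j$ by definition; hence $\Omega_b \cap \Sigma \cap \{ \kappa^i_\infty > 0, \ \kappa^j_\infty > 0 \}$ is $\prob$-null. Taking the union over the finitely many pairs $(i,j)$ with $i \neq j$, we conclude that on $\Omega_b \cap \Sigma$ at most one of the $\kappa^i_\infty$ is positive; combined with $\kappa^i_\infty \geq 0$ and $\sum_i \kappa^i_\infty = 1$, exactly one equals $1$ and the others vanish, so $\kappa_\infty \in \{ \e_1, \ldots, \e_d \}$. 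The final "in particular" assertion is then immediate, since a balanced and segregated market has $\prob[\Omega_b] = \prob[\Sigma] = 1$.

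I do not expect a genuine obstacle here: all the analytic content has already been absorbed into Lemma~\ref{lem: key for divers failure} (which in turn rests on the asymptotic-positivity lemma for local martingales and the triangle inequality for the $\g^{i|j}$). The only point needing a word of care is the \emph{modulo $\prob$} bookkeeping: the inclusions of Lemma~\ref{lem: key for divers failure} and the identity $\sum_i \kappa^i_\infty = 1$ hold only $\prob$-a.s., but since there are finitely many pairs $(i,j)$ the exceptional null sets combine into a single null set, and the stated conclusion holds $\prob$-a.s.
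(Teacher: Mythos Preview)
Your argument is correct and follows the same route as the paper: invoke Lemma~\ref{lem: key for divers failure} to see that on $\Omega_b$ two limiting shares cannot both be strictly positive without the corresponding companies being equivalent, which contradicts $\Sigma$. The paper's proof is just a terser version of yours; your additional remarks on $\kappa_\infty \in \overline{\Delta}^{d-1}$ and the $\prob$-null-set bookkeeping are accurate but not strictly needed.
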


\proof This is a simple corollary of Lemma \ref{lem: key for divers
failure}: On $\Omega_b$, if we had $\kappa^i_\infty > 0$ and
$\kappa^j_\infty > 0$ for any two companies $i$ and $j$, we should
have $i \sim j$; but the segregation event $\Sigma$ is exactly the
one where $i \nsim j$ for all pairs of companies $(i,j)$. \qed

\begin{rem}

This is a follow-up to the discussion in paragraph \ref{rem: on EMM}
on Equivalent Martingale Measures. Existence of an EMM $\qprob$,
coupled with Theorem \ref{thm: one company takes all}, imply that
for each $i \in \{ 1, \ldots, d \}$ we have $\qprob[\kappa^i_\infty
=1 \such \F_0] = \kappa^i_0 > 0$, thus $\prob[\kappa^i_\infty =1
\such \F_0]
> 0$ as well. This ceases to be true anymore if we consider balanced
markets. Indeed, in the next section one finds an example of a
balanced and segregated market, such that a \emph{specific} company
takes over the whole market with probability one.

\end{rem}

\section{Examples} \label{sec: example}

We consider here a parametric ``toy'' market model in order to
illustrate the results of the previous subsections and to clarify
some points discussed. The market will consist of two companies, and
their capitalizations are $S^0$ and $S^1$. Under $\prob$, $S^0
\equiv 1$, while $S^1_0 = 1$ and $\ud S^1_t = S^1_t (a_t \ud t +
\sigma_t \ud W_t)$, where $a$ and $\sigma$ are predictable
processes, $\sigma$ is strictly positive, and $W$ is a
one-dimensional Brownian motion. In the three cases we consider
below we always have $0 \leq a / \sigma^2 \leq 1/2$; it then turns
out that $\rho = (1 - a / \sigma^2, a / \sigma^2)$ and easy
computations show that
\begin{equation} \label{eq: example computation}
L_\infty = \frac{1}{2} \int_0^\infty \Big| \frac{a_t}{\sigma_t} -
\sigma_t \kappa^1_t \Big|^2 \ud t, \textrm{ as well as }  \{ 0 \nsim
1\} = \Big\{ \frac{1}{2} \int_0^\infty |\sigma_t|^2 \ud t = \infty
\Big \}.
\end{equation}

\subsection{Case $a = 0$}
Here, $L_\infty =  (1/2) \int_0^\infty
|\sigma_t \kappa^1_t|^2 \ud t \leq (1/2) \int_0^\infty |\sigma_t
S^1_t|^2 \ud t$. Observe that the random quantity $\int_0^\infty |\sigma_t S^1_t|^2 \ud
t$ is the quadratic variation of the local martingale $S^1$ up to
infinity, which should be finite, since $S^1$ has a limit at
infinity. Therefore, the market is balanced: $\Omega_b = \{ L_\infty < \infty \} =
\Omega$.

Observe also that $\{ 0 \nsim 1\} = \{ \lim_{t \to \infty} S^1_t = 0
\} = \{ \kappa_\infty = \e_0 \}$. Here, the limit in the event
$\Omega_b \cap \{ 0 \nsim 1\} = \{ 0 \nsim 1\}$ is identified as
being $\e_0$, and one sees that on $\{ 0 \sim 1\}$ we have $0 <
\kappa^0_\infty < 1$ as well as $0 < \kappa^1_\infty < 1$. In a
balanced market with equivalent companies the limiting capital
distribution might not be trivial.

Assume now that $\prob[\int_0^\infty |\sigma_t|^2 \ud t = \infty] =
1$; easy examples of this is when $\sigma$ is a positive constant,
or when $S^1$ is the inverse of a three-dimensional Bessel process.
From the discussion above, the market is balanced and segregated. We
note that there \emph{cannot} exist any probability measure $\qprob
\sim \prob$ such that $\kappa$ is a $\qprob$-martingale; for if
there existed one, the bounded martingale $\kappa^1$ would be
uniformly integrable, so that $0 = \expec^\qprob[\kappa^1_\infty] =
\kappa^1_0 = 1/2$ should hold, which is impossible.

This example clearly shows that balanced markets form a strictly
larger class than the ones satisfying the EMM hypothesis discussed
in \ref{rem: on EMM}.

\subsection{Case $\epsilon \leq a / \sigma^2 \leq 1/2 - \epsilon$} Here we assume
the previous inequality holds for all $(\omega,t) \in \Omega \times
\Real_+$ for some $0 < \epsilon < 1/4$; for example, one can just
pick some predictable, strictly positive process $\sigma$ and then
set $a = \sigma^2/4$.

As in the previous case $a = 0$, we have $\{ 0 \nsim 1\} = \{
\lim_{t \to \infty} S^1_t = 0 \} = \{ \kappa^1_\infty = 0 \}$ ---
this follows from \eqref{eq: example computation}; just divide the
equality
\[
\log S^1_t =  \int_0^t \Big( a_u - \frac{1}{2} \sigma_u^2 \Big) \ud
u + \int_0^t \sigma_u \ud W_u
\]
by $\int_0^t |\sigma_u|^2 \ud t$ and then use $a - \sigma^2/2 \leq -
\epsilon \sigma^2$ as $t$ tends to infinity. Because of this last
fact, using $\epsilon \leq a / \sigma^2$ and \eqref{eq: example
computation} again, we easily get $\{ 0 \nsim 1 \} \subseteq \{
L_\infty = \infty\} = \Omega_u = \Omega \setminus \Omega_b$. This
example shows that the limiting capital distribution can be
concentrated in one company even in the set where then market is not
balanced.

\subsection{Case $a = \sigma^2/2$} In this case, $\log S^1$ is a local martingale
with quadratic variation $\int_0^\cdot |\sigma_t|^2 \ud t$, and thus
on $\{0 \nsim 1\} = \{ \int_0^\infty |\sigma_t|^2 \ud t = \infty \}$
we have $\liminf_{t \to \infty} \kappa^1_t = 0$ and $\limsup_{t \to
\infty} \kappa^1_t = 1$; obviously, the same relationships hold for
$\kappa^0$ as well. On the other hand, on $\{ \int_0^\infty
|\sigma_t|^2 \ud t < \infty \}$ we have that $\lim_{t \to \infty}
\kappa_t$ exists, and since $2 L_\infty = \int_0^\infty |\sigma_t
(\kappa^1_t - 1/2)|^2 \ud t$ by \eqref{eq: example computation}, we
have $L_\infty < \infty$. We conclude that
\[
\Omega_u  \ = \ \Big \{ \int_0^\infty |\sigma_t|^2 \ud t = \infty
\Big \} \ = \big \{ \liminf_{t \to \infty} \kappa^i_t = 0, \
\limsup_{t \to \infty} \kappa^i_t = 1, \textrm{ for both } i = 0, 1
\big \},
\]
which shows that the result of Proposition \ref{prop: lim cap dist
exists for balanced} cannot be strengthened. It also shows that it
is exactly the unbalanced markets that bring diversity into the
picture and the hope that not all capital will concentrate in one
company only.

\section{The Quasi-Left-Continuous Case} \label{sec: qlc}

We now discuss all the previous results in a more general setting,
where we allow for the processes of company capitalizations to have
jumps. For notions regarding semimartingale theory used in the
sequel, one can consult Jacod and Shiryaev \cite{Jacod - Shiryaev}.
Numbered subsections correspond to previous numbered sections, i.e.,
subsection \ref{subsec: set-up in general case} to section \ref{sec:
ito process model}, subsection 7.2 to section 2, and so on.

\subsection{The set-up} \label{subsec: set-up in general case}
We denote by $S^i$ the capitalization of company $i$. Each $S^i$,
$i=1, \ldots, d$ is modeled as a semimartingale living on an
underlying probability space $\probtriple$, adapted to the
filtration $\filtration$ that satisfies the usual conditions. One
extra ingredient that \emph{has} to be added (in view of Example
\ref{exa: death of company} later) is to allow for the
capitalizations to become zero, which can be considered as
\emph{death}, or \emph{annihilation} of the company. We define the
\textsl{lifetime} of company $i$ as $\zeta^i := \inf \{ t \in
\Real_+ \such S^i_{t-}=0 \textrm{ or } S^i_{t}=0 \}$; each $\zeta^i$
is an $\mathbf{F}$-stopping time. After dying, companies cannot
revive; thus we insist that $S^i_t \equiv 0$, for all $t \geq
\zeta^i$. Note that --- even though individual companies might die
--- we suppose the whole market lives forever; $\max_{1 \leq i \leq d} \zeta^i = + \infty$,
$\prob$-a.s.

We want to write an expression like:
\begin{equation} \label{eq: model, general}
\ud S^i_t = S^i_{t-} \ud X^i_t, \textrm{ for } t \in \dbra{0,
\zeta^i}, \ i = 1, \ldots, d.
\end{equation}
where $\ud X^i_t$ plays the  equivalent r\^ole of $a^i_t \ud t + \ud
M^i_t$ of \eqref{eq: model}. Let us assume for the moment that
$\zeta^i = \infty$ for all $i = 1, \ldots, d$, so that $X^i$ can be
defined as the \textsl{stochastic logarithm} of $S^i$: $X^i :=
\int_0^\cdot (\ud S^i_t / S^i_{t-})$. Then, we know that if we fix
the \textsl{canonical truncation function} $x \mapsto \hx$
($\indic_A$ will denote the indicator of a set $A$), the
\textsl{canonical decomposition} of the semimartingale $X = (X^1,
\ldots, X^d)$ is
\begin{equation} \label{eq: canonical representation}
X \ = \ B + M + [\hx] * (\mu - \eta) + [\hbarx] * \mu.
\end{equation}
In the decomposition \eqref{eq: canonical representation}, $B$ is
predictable and of finite variation; $M$ is a continuous local
martingale; $\mu$ is the \textsl{jump measure} of $X$, i.e., the
random measure on $\Real_+ \times \Real^d$ defined by $\mu ([0,t]
\times A) := \sum_{0 \leq s \leq t} \indic_{A \setminus \{0\}}
(\Delta X_s)$, for $t \in \Real_+$ and $A \subseteq \Real^d$; the
asterisk ``$*$'' denotes integration with respect to random
measures; $\eta$ is the \textsl{predictable compensator} of $\mu$
--- it satisfies $[|x|^2 \wedge 1]*\eta_t < \infty$ for all $t \in \Real_+$, and
$\eta[\Real_+ \times (- \infty, -1)^d] = 0$, since each $S^i$ ($i=1,
\ldots, d$) is constrained to be positive.

Since we do not know a-priori that $\zeta^i = \infty$ for all $i=1,
\ldots, d$, we take the opposite direction of assuming the
representation \eqref{eq: canonical representation}, and pick as
\emph{inputs} a continuous local martingale $M$, a
quasi-left-continuous \textsl{semimartingale jump measure} $\mu$,
and a continuous process $B$ that is locally of finite variation
before a possible explosion to $- \infty$. The continuous local
martingale $M$ being obvious, we remark on the last two objects.

A \textsl{semimartingale jump measure} $\mu$ is a random counting
measure on $\Real_+ \times \Real^d$ with $\mu( \Real_+
 \times \{ 0 \} ) = 0$ and $\mu(\{ t \} \times \Real^d)$
being $\{0, 1\}$-valued for all $t \in \Real_+$, such that its
predictable compensator $\eta$ exists and satisfies $[|x|^2 \wedge
1]*\eta_t < \infty$ for all $t \in \Real_+$. $\mu$ being
quasi-left-continuous means $\mu(\{ \tau \} \times \Real^d) = 0$ for
all \emph{predictable} stopping times $\tau$; this is equivalent to
$\eta(\{ t \} \times \Real^d) = 0$ for all $t \in \Real_+$. In other
words, jumps are permitted as long as they only come in a totally
unpredictable (inaccessible) way. It will also be assumed that
$\mu[\Real_+ \times (- \infty, -1)^d] = 0$ (equivalently,
$\eta[\Real_+ \times (- \infty, -1)^d] = 0$) to keep the
company-capitalization processes positive.

The twist comes for the predictable finite-variation process $B$,
for which we shall assume that its coefficient-processes can explode
to $- \infty$ in finite time. In other words, for each $i = 1,
\ldots, d$ there exists a \emph{strictly increasing} sequence of
stopping times $(\zeta_n^i)_{n \in \Natural}$ such that the stopped
process $\big( B^i_{\zeta_n^i \wedge t} \big)_{t \in \Real_+}$ is
continuous (thus predictable) and of finite variation, and that
$\lim_{n \to \infty} B^i_{\zeta_n^i} = - \infty$. It is clear that
we can choose $\zeta_n^i := \inf \{t \in \Real_+ \ | \ B^i_t = -n
\}$. We further define
\begin{equation} \label{eq: lifetime of company}
\zeta^i \ := \ \big( \lim_{n \to \infty} \uparrow \zeta_n^i \big)
\wedge \inf \{ t \in \Real_+ \ | \ \mu(t,-1) = 1 \}.
\end{equation}
where we write $\mu(t,-1)$ as short for $\mu(\{(t,-1)\})$. The last
definition should be intuitively obvious: annihilation of company
$i$ happens either (a) when $B^i$ explodes to $- \infty$ in which
case we have a continuous transition of $S^i$ to zero in that
$S^i_{\zeta^i -} = 0$, or (b) the first time when $\mu(t,-1) = 1$,
where we have a jump down to zero; in this case we have
$S^i_{\zeta^i -} > 0$ and $S^i_{\zeta^i} = 0$.

Having these ingredients we now \emph{define} the process $X$ via
\eqref{eq: canonical representation}, where we tacitly assume that
$X^i = - \infty$ on $\dbraoo{\zeta^i, \infty}$. We also define the
company capitalizations $S^i$ for $i = 1, \ldots, d$ via \eqref{eq:
model, general}, where we set $S^i = 0$ on $\dbraoo{\zeta^i,
\infty}$.

Setting $C := [M, M]$ to be the \textsl{quadratic covariation}
process of $M$, the triple $(B, C, \eta)$ is called the
\textsl{triplet of predictable characteristics} of $X$. One can find
a continuous, one-dimensional, strictly increasing process $G$ such
that the processes $C$ and $\eta$ are absolutely continuous with
respect to it, in the sense of the equations \eqref{eq: triplet of
pred characteristics} below --- for instance, one can choose $G =
\sum_{i=1}^d [M^i, M^i] + [|x|^2 \wedge 1]*\eta$. We shall also
assume that each $B^i$, $i = 1, \ldots, d$ is absolutely continuous
with respect to $G$ on the stochastic interval $\dbraco{0, \zeta^i}$
--- otherwise it can be shown that there are trivial opportunities
for free lunches of the most egregious kind --- one can check
\cite{KK: num and arbitrage}, Section 5, for more information. It
then follows that we can write
\begin{equation} \label{eq: triplet of pred characteristics}
B = \int_0^\cdot b_t \ud G_t, \ C = \int_0^\cdot c_t \ud G_t,
\textrm{ and } \eta ([0,t] \times E) = \int_0^t \Big( \int_{\Real^d}
\indic_E (x) \nu_u (\ud x) \Big) \ud G_u
\end{equation}
for any Borel subset $E$ of $\Real^d$. Here, all $b$, $c$ and $\nu$
are predictable, $b$ is a vector process, $c$ is a positive-definite
matrix-valued process and $\nu$ is a process with values in the
space of measures on $\Real^d$ that satisfy $\nu(\{0\}) = 0$ and
integrate $x \mapsto 1 \wedge |x|^2$ (so-called \textsl{L\'evy
measures}). Each process $b^i$ for $i = 1, \ldots, d$ is
$G$-integrable on each stochastic interval $\dbra{0 , \zeta_n^i}$,
but on the event $\{ \zeta^i < \infty, \ S_{\zeta^i -} = 0\}$, $b^i$
it is not integrable on $\dbra{0 , \zeta^i}$.

The differential ``$\ud G_t$'' will be playing the r\^ole that
``$\ud t$'' was playing before --- for example, an \textsl{interest
rate process} now is a one-dimensional predictable process $r$ such
that $\int_0^t |r_u| \ud G_u < \infty$ for all $t \in \Real_+$.

\subsection{Perfectly balanced markets} The notion of a perfectly
balanced market is exactly the same as before: we ask that $\kappa$
is a vector $(\prob, \bF)$-martingale.

The first order of business is to find necessary and sufficient
conditions in terms of the triplet $(b, c, \nu)$ for the market to
be perfectly balanced. It\^o's formula gives that the drift part of
the stochastic logarithm process $\int_0^\cdot (\ud \kappa^i_t /
\kappa^i_{t-})$ on $\dbraco{0, \zeta^i}$ is
\[
\int_0^\cdot \bigg(\inner{\e_i - \kappa_{t-}}{b_t - c_t
  \kappa_{t-}}
  + \int_{\Real^d} \bigg[ \frac{\inner{\e_i - \kappa_{t-}}{x}}{1 + \inner{\kappa_{t-}}{x}} - \inner{\e_i - \kappa_{t-}}{x} \indic_{\{ |x| \leq 1 \}} \bigg]
  \nu_t
(\ud x) \bigg)  \ud G_t.
\]
In a perfectly balanced market, this last quantity has to to vanish
--- using same arguments as in the proof of Proposition \ref{prop:
charact of efficiency, cont} we get the following result.

\begin{prop}
The market is perfectly balanced if and only if there exists an
interest rate process $r$ such that the following relationship holds
for each coordinate $i=1, \ldots, d$ on $\dbraco{0, \zeta^i}$:
\begin{equation} \label{eq: drift gen}
b - c \kappa_- + \int \bra{ \frac{x}{1 + \inner{\kappa_-}{x}} - \hx
} \nu (\ud x) = r \fone.
\end{equation}
\end{prop}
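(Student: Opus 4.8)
The plan is to run the proof of Proposition~\ref{prop: charact of efficiency, cont} in the present jump-enhanced setting, reading the drift of $\kappa^i$ off the It\^o computation displayed just before the statement. Fix $i$ and work on $\dbraco{0,\zeta^i}$, where $\kappa^i_{t-}>0$; past $\zeta^i$ one has $\kappa^i\equiv 0$, which is trivially a $(\prob,\bF)$-martingale, so all the content lies on $\dbraco{0,\zeta^i}$. Since $0\le\kappa^i\le 1$, the process $\kappa^i$ is bounded, hence a special semimartingale, and it is a martingale if and only if it is a local martingale if and only if its predictable finite-variation part vanishes. Because $\kappa^i_{t-}>0$ on $\dbraco{0,\zeta^i}$, that predictable finite-variation part vanishes precisely when the $\ud G_t$-density of the drift of the stochastic logarithm $\int_0^\cdot(\ud\kappa^i_t/\kappa^i_{t-})$ --- i.e.\ the bracketed expression displayed above the statement --- vanishes $(\prob\otimes G)$-a.e.\ on $\dbraco{0,\zeta^i}$.

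For the second step I would carry out the linear-algebra reduction exactly as in Proposition~\ref{prop: charact of efficiency, cont}. Writing
\[
\beta_t \ := \ b_t - c_t\kappa_{t-} + \int_{\Real^d}\Big[\frac{x}{1+\inner{\kappa_{t-}}{x}} - \hx\Big]\,\nu_t(\ud x)
\]
for the vector on the left-hand side of \eqref{eq: drift gen}, one checks by linearity that the displayed drift density of the $i$-th coordinate equals $\inner{\e_i-\kappa_{t-}}{\beta_t}$. The vectors $(\e_i-\kappa_{t-})_{1\le i\le d}$ all lie in, and span, the orthogonal complement of $\fone$ in $\Real^d$ (they contain every $\e_i-\e_j$, while $\inner{\fone}{\e_i-\kappa_{t-}}=1-1=0$ since $\inner{\fone}{\kappa_{t-}}=1$). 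Hence the drifts of \emph{all} the $\kappa^i$ vanish simultaneously if and only if $\beta_t\perp\fone^\perp$, i.e.\ $\beta_t=r_t\fone$ with $r:=\inner{\fone}{\beta}/d$. Predictability of $r$ follows from that of $b$, $c\kappa_-$ and the integral term; and since $b^i$ is $G$-integrable on each $\dbra{0,\zeta_n^i}$ while $c\kappa_-$ and the $\nu$-integral term are locally $G$-integrable, $r$ satisfies $\int_0^t|r_u|\ud G_u<\infty$ for all $t\in\Real_+$ before a possible explosion, so $r$ is an interest rate process in the sense defined above. This settles the forward implication; the converse is immediate, since if \eqref{eq: drift gen} holds then every $\kappa^i$ is a bounded local martingale, hence a martingale.

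The step I expect to require the most care is the well-definedness and local $G$-integrability of the integral term $\int[\frac{x}{1+\inner{\kappa_{t-}}{x}}-\hx]\,\nu_t(\ud x)$, which is what makes ``$\kappa^i$ has vanishing drift'' a legitimate restatement of ``$\kappa^i$ is a local martingale''. Here one uses that $\kappa_{t-}\in\overline{\Delta}^{d-1}$ and that $\nu_t$ charges no point with a coordinate $<-1$, nor the point where all surviving coordinates equal $-1$ (since $\max_j\zeta^j=+\infty$): thus $1+\inner{\kappa_{t-}}{x}\ge 0$ and is $\nu_t$-a.e.\ positive, for $|x|>1$ the ratio $x/(1+\inner{\kappa_{t-}}{x})$ is $\nu_t$-integrable by the same bound making the big-jump part of $\kappa$ a semimartingale, and for $|x|\le 1$ one has $\frac{x}{1+\inner{\kappa_{t-}}{x}}-x=-\,\frac{x\inner{\kappa_{t-}}{x}}{1+\inner{\kappa_{t-}}{x}}=O(|x|^2)$, which the L\'evy measure $\nu_t$ integrates. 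This bookkeeping is already implicit in the fact --- used by the authors in writing the displayed It\^o drift --- that $\kappa$ is a special semimartingale on each $\dbraco{0,\zeta^i}$; granting it, the argument above is the verbatim analogue of the continuous case treated in Proposition~\ref{prop: charact of efficiency, cont}.
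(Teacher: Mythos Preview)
Your proposal is correct and follows essentially the same approach as the paper: the paper's ``proof'' consists only of the displayed It\^o drift computation together with the remark that one uses ``same arguments as in the proof of Proposition~\ref{prop: charact of efficiency, cont}'', and your write-up is precisely the expected filling-in of those arguments (bounded implies martingale $\Leftrightarrow$ local martingale $\Leftrightarrow$ vanishing drift, then the span of $(\e_i-\kappa_{t-})_i$ being $\fone^\perp$). Your extra paragraph on the well-definedness and local $G$-integrability of the $\nu$-integral is more detail than the paper supplies, but it is consistent with what the displayed drift formula already presupposes.
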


Using \eqref{eq: drift gen} above one computes that in a perfectly
balanced market the relative company capitalization $\kappa^i$ for
each $i = 1, \ldots, d$ satisfies
\begin{equation} \label{eq: SDE for perfect bal gen}
\kappa^i = \kappa_0^i \Exp \bigg( \int_0^\cdot \inner{\e_i -
\kappa_{t-}}{\ud M_t} + \bigg[ \frac{\inner{\e_i - \kappa_{-}}{x}}{1
+ \inner{\kappa_{-}}{x}} \bigg] * (\mu - \eta) \bigg), \textrm{ on }
\dbra{0, \zeta^i}
\end{equation}
where $\Exp$ is the stochastic exponential operator.

In order to get a result about existence of perfectly balanced
markets similar to Theorem \ref{thm: existence of perf bal econ,
cont} one has to start with the continuous local martingale $M$ and
a quasi-left-continuous semimartingale jump measure $\mu$ and show
that equations \eqref{eq: SDE for perfect bal gen} have a strong
solution. Below, we show by example that even if we start with an
initial distribution of capital $\kappa_0$ in the \emph{open}
simplex $\Delta^{d-1}$ (so that $\kappa_0^i > 0$ for all $i = 1,
\ldots, d$) and jumps of size $-1$ are not allowed by the jump
measure, annihilation of a company might come at \emph{finite} time
--- stock-killing times were not included just for the sake of
generality, but they come up naturally if possibly unbounded jumps
above are allowed for the company-capitalization processes.

\begin{ex} \label{exa: death of company}
Consider a simple market with two companies (we call them $0$ and
$1$) for which $\kappa^0_0 = \kappa^1_0 = 1/2$, $M \equiv 0$, and
$\mu$ is a jump measure with \emph{at most} one jump at time $\tau$
that is an exponential random variable, and size $l(\tau)$ for a
deterministic function $l$ given by $l(t) = \big( 1 - e^{t/2} / 2
\big)^{-1} \indic_{[0 , 2 \log2)} (t)$. Observe that there is no
jump on $\{ \tau > 2 \log 2\}$, an event of positive probability,
and that $\nu_t (\ud x) = \indic_{(0, \tau]} \delta_{(0, l_t)} (\ud
x)$, where $\delta$ is the Dirac measure.

Now, according to \eqref{eq: SDE for perfect bal gen} the process
$\kappa^1$ should satisfy
\[
\frac{\ud \kappa^1_t}{\ud t} = - \frac{\kappa^1_t (1 - \kappa^1_t)
l_t}{1 + \kappa_t^1 l_t}, \textrm{ for all } t < \tau.
\]
It can be readily checked that the solution of the previous
(ordinary) differential equation for $t < \min \{ \tau, 2 \log 2 \}$
is $\kappa^1 = 1/l$. Thus, on $\{ \tau \geq 2 \log 2 \}$ (which has
positive probability), we have $\kappa^1_t = 0$ for all $t \geq 2
\log 2$, i.e., $\prob[\zeta^1 < \infty] > 0$.
\end{ex}

\begin{thm}
Consider a continuous $(\prob, \bF)$-local martingale $M$ and a
quasi-left-continuous semimartingale jump measure $\mu$. Then, for
any $\F_0$-measurable initial condition $\kappa_0 \equiv
(\kappa_0^i)_{1 \leq i \leq d}$ with $\prob[\kappa_0 \in
\overline{\Delta}^{d-1}] = 1$ the stochastic differential equations
\eqref{eq: SDE for perfect bal gen} have a unique strong solution on
$[0,\infty)$ that lives on $\overline{\Delta}^{d-1}$.

Select any interest-rate process $r$, as well as any $\F_0$-measurable
initial random vector $S_0 = (S_0^i)_{1 \leq i \leq d}$ such that
$S_0^i/ \inner{S_0}{\fone} = \kappa_0^i$. For all $i=1, \ldots, d$,
define $\zeta^i$ by \eqref{eq: lifetime of company} and also define
$b^i$ by \eqref{eq: drift gen} on the interval $\dbraco{0,
\zeta^i}$. With $B = \int_0^\cdot b_t \ud G_t$, if we define $X$ via
\eqref{eq: canonical representation}, then $S$ as defined by
\eqref{eq: model, general} is a model of a perfectly balanced
market.
\end{thm}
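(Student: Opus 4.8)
The plan is to mimic the structure of the proof of Theorem \ref{thm: existence of perf bal econ, cont}, handling the continuous part exactly as before and then dealing with the jump part and the possible explosion of $B^i$ separately. First I would establish existence and uniqueness of a strong solution to \eqref{eq: SDE for perfect bal gen} that stays in $\overline{\Delta}^{d-1}$. As in the continuous case, write the system as a stochastic differential equation on the compact cube $[0,1]^d$: the continuous-martingale coefficients $\kappa^i(\e_i - \kappa)$ are quadratic, hence Lipschitz on $[0,1]^d$, and the jump coefficients $x \mapsto \kappa^i_{-}\inner{\e_i - \kappa_-}{x}/(1 + \inner{\kappa_-}{x})$ are also Lipschitz in $\kappa_-$ on $[0,1]^d$ uniformly in $x$ in the relevant range (using $\eta[\Real_+ \times (-\infty,-1)^d] = 0$, so $1 + \inner{\kappa_-}{x}$ stays bounded away from $0$ when $\kappa_- \in \overline{\Delta}^{d-1}$), with the required square-integrability against $\nu$ coming from $[|x|^2 \wedge 1]*\eta_t < \infty$. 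Thus the standard existence/uniqueness theorem for SDEs driven by a continuous local martingale and a compensated integer-valued random measure (e.g.\ Jacod--Shiryaev) gives a unique strong solution up to a stopping time $\tau$ at which $\kappa$ first hits $\partial[0,1]^d$; set $\kappa \equiv \kappa_{\tau}$ (frozen) or stop there and argue that in fact $\kappa$ never leaves $\overline{\Delta}^{d-1}$.

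The next step is to show the solution genuinely lives on $\overline{\Delta}^{d-1}$ for all $t$. The simplex-sum constraint is handled exactly as before: the process $z := 1 - \inner{\fone}{\kappa}$ solves a linear SDE $\ud z_t = - z_t(\inner{\kappa_{t-}}{\ud M_t} + [\ \cdots\ ]*(\mu-\eta))$ with $z_0 = 0$, whence $z \equiv 0$ by uniqueness, so $\inner{\fone}{\kappa} = 1$ throughout. Positivity of each coordinate is where the jump structure matters: from \eqref{eq: SDE for perfect bal gen}, $\kappa^i$ is an explicit stochastic exponential $\kappa^i_0 \Exp(\cdots)$, and a stochastic exponential of a process whose jumps are strictly greater than $-1$ stays strictly positive until the stochastic exponential jumps \emph{to} zero. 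The jump of the integrand at a jump time $s$ with $\Delta X_s = x$ is $\inner{\e_i - \kappa_{s-}}{x}/(1 + \inner{\kappa_{s-}}{x})$, and $1 + $ this equals $(1 + \inner{\e_i}{x})/(1 + \inner{\kappa_{s-}}{x}) = (1 + x^i)/(1+\inner{\kappa_{s-}}{x})$, which is $0$ precisely when $x^i = -1$, i.e.\ at a jump of size $-1$ in the $i$th coordinate; by definition \eqref{eq: lifetime of company} this is the event defining part (b) of $\zeta^i$. So $\kappa^i > 0$ on $\dbraoo{0,\zeta^i}$ and $\kappa^i = 0$ on $\dbra{\zeta^i, \infty}$, consistent with the prescribed killing. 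That the continuous part cannot drive $\kappa^i$ to $0$ in finite time before such a jump follows, as in the continuous-case proof, by applying It\^o to $\log \kappa^i$ on $\dbraco{0,\zeta^i}$ and noting both the finite-variation part (which now also includes a $\nu$-integral term that is locally $G$-integrable by the $[|x|^2\wedge 1]*\eta$ bound) and the quadratic-variation part are finite on bounded intervals as long as $\kappa \in \overline{\Delta}^{d-1}$, so $\lim_{t\uparrow\zeta^i\wedge T}\log\kappa^i_t > -\infty$ whenever the boundary is reached only continuously.

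Once $\kappa$ is constructed, the second paragraph of the statement is essentially a bookkeeping verification. Given an interest-rate process $r$ and an initial $S_0$ compatible with $\kappa_0$, define $b^i$ by \eqref{eq: drift gen} on $\dbraco{0,\zeta^i}$, assemble $B$, $X$ via \eqref{eq: canonical representation}, and $S$ via \eqref{eq: model, general}. One must check (i) that $S^i$ so defined is consistent with being killed at $\zeta^i$ --- this is built into \eqref{eq: lifetime of company}, with case (a) ($B^i \to -\infty$) giving a continuous transition $S^i_{\zeta^i-} = 0$ and case (b) ($\mu(t,-1)=1$) a jump to zero --- and (ii) that $\kappa^i := S^i/\inner{S}{\fone}$ recovers the constructed process, which holds because $\inner{S}{\fone} = \inner{S_0}{\fone}V^\kappa$ and the SDE for $S^i/\inner{S}{\fone}$ reduces, using \eqref{eq: drift gen}, to \eqref{eq: SDE for perfect bal gen}, so uniqueness identifies the two. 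Finally, $\kappa^i$ being a bounded stochastic exponential of a local martingale is itself a local martingale, and being bounded (in $[0,1]$) it is a true $(\prob,\bF)$-martingale; hence the market is perfectly balanced.

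I expect the main obstacle to be the positivity/lifetime analysis: carefully justifying that the only way $\kappa^i$ can reach the boundary is either a genuine jump of size $-1$ in coordinate $i$ or a continuous extinction driven by $b^i$ exploding, and that these coincide exactly with \eqref{eq: lifetime of company}, so that the solution and the prescribed death time $\zeta^i$ are mutually consistent. The Lipschitz/integrability estimates for the existence theorem and the martingale argument at the end are routine; the delicate point is the interaction between possibly unbounded positive jumps of $X^i$ (which can make $1 + \inner{\kappa_-}{x}$ large but not small) and the exactness of the extinction mechanism.
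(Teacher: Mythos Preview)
Your overall architecture matches the paper's, but there are two genuine gaps.

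First, the Lipschitz claim. You assert that the jump coefficient $\kappa^i_-\inner{\e_i-\kappa_-}{x}/(1+\inner{\kappa_-}{x})$ is Lipschitz on $[0,1]^d$ (or on $\overline{\Delta}^{d-1}$) because ``$1+\inner{\kappa_-}{x}$ stays bounded away from $0$ when $\kappa_-\in\overline{\Delta}^{d-1}$''. This is false: take $\kappa_-$ close to $\e_j$ and $x$ with $x^j=-1$; then $1+\inner{\kappa_-}{x}$ is close to $0$. The paper avoids this by localizing to the compacts $K_n:=[n^{-1},1-n^{-1}]^d$, on which (intersected with the simplex) the denominator \emph{is} bounded below, getting solutions up to stopping times $\tau_n$ and then pasting.

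Second, and more serious, your positivity argument breaks down. You claim that the only way $\kappa^i$ can reach $0$ is via a jump with $x^i=-1$, and that a continuous approach to the boundary is ruled out because the $\nu$-integral contribution to the drift of $\log\kappa^i$ is ``locally $G$-integrable by the $[|x|^2\wedge 1]*\eta$ bound''. That bound only controls small jumps; large positive jumps in \emph{other} coordinates contribute to the compensator a term of order $|x|$, not $|x|^2\wedge 1$, and this can diverge. The paper's Example~\ref{exa: death of company} is precisely a counterexample: there are no jumps of size $-1$ at all, yet on a set of positive probability $\kappa^1$ is driven continuously to $0$ in finite time by the compensator of a single large positive jump in the other coordinate. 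Consequently the paper does \emph{not} argue that $\prob[\tau<\infty]=0$; instead it accepts that the boundary may be hit, sets $\zeta^i=\tau$ for the coordinates with $\kappa^i_\tau=0$, removes those companies, and restarts the construction inductively on the survivors (at most $d-1$ rounds). Your proposal needs this inductive restart; the It\^o-on-$\log\kappa^i$ argument from the continuous case genuinely fails here.
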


\begin{proof}
More or less, one follows the steps of the proof of Theorem
\ref{thm: existence of perf bal econ, cont}, with some twists. We
assume that the initial condition $\kappa_0$ lives on $\Delta^{d-1}$
--- any company $i = 1, \ldots, d$ for which $\kappa^i_0 = 0$ can be safely
disregarded, since then $\kappa^i \equiv 0$.

Set $K_n := [n^{-1}, 1 - n^{-1}]^d$ for all $n \in \Natural$; the
co\"efficients of \eqref{eq: SDE for perfect bal gen} are Lipschitz
on $K_n$. A theorem on strong solutions of stochastic differential
equations involving random measures has to be invoked --- one can
check for example Bichteler \cite{Bichteler: stoch. integration}
(Proposition 5.2.25, page 297) for existence of solutions of
equations of the form \eqref{eq: SDE for perfect bal gen} in the
case of Lipschitz co\"efficients. We infer the existence of an
increasing sequence of stopping times $(\tau_n)_{n \in \Natural}$
such that $\kappa_t \in K_n$ for all $t < \tau_n$ and
$\kappa_{\tau_n} \in \Real^d \setminus K_n$. Using \eqref{eq: SDE
for perfect bal gen} one can show that $\inner{\kappa}{\fone}$ is
constant on $\dbra{0, \tau_n}$ --- since $\kappa_0 \in \Delta^{d-1}$
we have $\kappa_t \in \Delta^{d-1}$ for all $t < \tau_n$. Now, we
claim that $\kappa_{\tau_n} \in \overline{\Delta}^{d-1}$. Since
$\inner{\kappa_{\tau_n}}{\fone} = 1$ we only need show that
$\kappa^i_{\tau_n} \geq 0$ for all $i = 1, \ldots, d$. If $\mu( \{
\tau_n \} \times \Real^d) = 0$ this is trivial. Otherwise, let
$\xi_n \in [-1, \infty)^d$ the (random) point such that $\mu(\tau_n,
\xi_n) = 1$; \eqref{eq: SDE for perfect bal gen} gives
\[
\kappa^i_{\tau_n} = \kappa^i_{\tau_n -} \Big( \frac{1 + \xi_n^i}{1 +
\inner{\kappa_{\tau_n -}}{\xi_n}} \Big) \geq 0,
\]
since $\xi_n^i \geq -1$ and $\inner{\kappa_{\tau_n -}}{\xi_n} > -1$
in view of the fact that $\kappa_{\tau_n -} \in \Delta^{d-1}$.

Pasting solutions together we get that there exists a stopping time
$\tau$ such that $\kappa_t \in \Delta^{d-1}$ for all $t < \tau$ and
$\kappa_{\tau} \in \partial \Delta^{d-1}$ on $\{ \tau < \infty \}$.
Unlike the proof of Theorem \ref{thm: existence of perf bal econ,
cont} we cannot hope now that $\prob [ \tau < \infty ] = 0$, as
Example \ref{exa: death of company} above shows. Rather, we set
$\zeta^i = \tau$ if $\kappa^i_\tau = 0$ for $i = 1, \ldots, d$.

We have constructed a solution to \eqref{eq: SDE for perfect bal
gen} on the stochastic interval $\dbra{0, \tau}$. On the event $\{
\tau < \infty \}$ we continue the construction of the solution to
\eqref{eq: SDE for perfect bal gen} inductively, removing all
companies that have died. In at most $d - 1$ steps we either have
constructed the solution for all $t \in \Real^d$, or only one
company (say, $i$) has remained in which case we shall have $\kappa
= \e_i$ from then onwards.
\end{proof}

\subsection{Perfect balance and growth}

Growth-optimality of a portfolio and the market are now defined, and
their relation to perfect balance is established.

A portfolio is a $d$-dimensional predictable and $X$-integrable
processes, and from now onwards we restrict attention to the
$\C$-constrained class $\Pi_\C$ where $\C \equiv
\overline{\Delta}^{d-1}$. If $V^\pi$ denotes the wealth process
generated by $\pi$ we have
\[
\frac{\ud V^\pi_t}{V^\pi_{t-}} \ = \ \sum_{i=1}^d \pi_t^i \frac{\ud
S^i_t}{S^i_{t-}} + \Big( 1 - \sum_{i=1}^d \pi_t^i \Big) r_t \ud G_t,
\]
where $r$ is some interest rate process coming from a bank in the
market.

The market portfolio is not $\kappa$ now, but rather its
left-continuous version $\kappa_-$ (the vector process $\kappa$ as
appears in (\ref{eq: market portf cont}) is not in general
predictable, but only adapted and right-continuous). It is trivial
to check that $V^{\kappa_-} = \inner{S}{\fone}/\inner{S_0}{\fone}$.

The concept of growth of a portfolio is sometimes not well-defined,
as the log-wealth process $\log V^\pi$ might not be a special
semimartingale, which means that its finite-variation part fails to
exist. In order to define a growth optimal portfolio $\rho$, we use
the idea contained in Remark \ref{rem: on numeraire property of
growth optimal}: we ask that $V^\pi / V^\rho$ is a supermartingale
for all $\pi \in \Pi_\C$. It turns out (one can check \cite{KK: num
and arbitrage}, for example) that this requirement is equivalent to
$\rel(\pi | \rho) \leq 0$ for all $\pi \in \Pi_\C$, where the
\textsl{relative rate of return} process is
\begin{equation} \label{eq: rel rate of ret}
\rel(\pi | \rho) \ := \ \inner{\pi - \rho}{b - r \fone} - \inner{\pi
- \rho}{c \rho} + \int \Big[ \frac{\inner{\pi - \rho}{x}}{1 +
\inner{\rho}{x}} - \inner{\pi - \rho}{x} \indic_{\{ |x| \leq 1 \}}
\Big] \nu (\ud x).
\end{equation}
The market will be called a growth market if $\kappa_-$ is
growth-optimal according to this last definition. It is easily shown
that in order to have a growth market we must have \eqref{eq: drift
gen} holding, where $r$ is now the banking interest rate.

Exactly the same remarks on interest rates hold as the ones in
subsection \ref{subsec: interest rate} --- the concepts of perfect
balance and growth in markets are thus equivalent.

\subsection{Balanced markets}

To define the loss-of-balance process, let $\rho$ be the
growth-optimal portfolio in the class $\Pi_\C$ and set
\[
L := \int_0^\cdot \big(- \rel(\kappa_{t-} | \rho_t) + \frac{1}{2}
c_t^{\kappa_- | \rho} \big) \ud G_t + \Big[ 1 \wedge \Big| \log
\frac{1+\inner{\kappa_-}{x}}{1+\inner{\rho}{x}} \Big|^2 \Big]
* \eta,
\]
where we define $c^{\pi_1 | \pi_2} := \inner{\pi_2 - \pi_1}{c(\pi_2 - \pi_1)}$
for two portfolios $\pi_1$ and $\pi_2$. As before, set $\Omega_b :=
\{ L_\infty < \infty\}$ and $\Omega_u := \Omega \setminus \Omega_b =
\{ L_\infty = \infty\}$. The above definition of $L$ is slightly
different than the one of Definition \ref{dfn: balanced econ} for
the case of It\^o processes, but for this special case it is easy to
see that the sets $\Omega_b$ and $\Omega_u$ that are obtained using
the two definitions are the same --- and this is the only thing of
importance.

With a little help from Lemma \ref{lem: pos mart conv gen} (more
precisely, its generalization discussed in Remark
\ref{rem: pos supermart conv}) we get that $\Omega_b = \{ \lim_{t
\to \infty} (V_t^{\kappa_-} / V^\rho_t)
> 0 \}$ and $\Omega_u = \{ \lim_{t \to \infty} (V_t^{\kappa_-} /
V^\rho_t) = 0 \}$. Based on this characterization of the event of
balanced outcomes, Theorem \ref{thm: charact of weak eff} can be
proved for our more general case now.

\subsection{Limiting capital distribution of balanced markets}

Of course, the event-inclusion $\Omega_b \ \subseteq \ \{
\kappa_\infty := \lim_{t \to \infty} \kappa_t \textrm{ exists} \}$
follows exactly from the equivalent of Theorem \ref{thm: charact of
weak eff} in the quasi-left-continuous case --- a limiting capital
distribution exists for the balanced outcomes.

Two companies are equivalent (we write $i \sim_\omega j$) if $d^{i |
j} (\omega) = \infty$, where
\begin{equation} \label{eq: dist of companys gen}
d^{i|j} := \int_0^\infty \big( |\rel(\e_i | \rho_t) - \rel(\e_j |
\rho_t)| + \frac{1}{2} c^{i | j}_t \big) \ud G_t + \Big[ 1 \wedge
\Big| \log \frac{1+ x^i}{1+ x^j} \Big|^2 \Big]
* \eta_\infty
\end{equation}
is a measure of distance between two companies in an
$\omega$-by-$\omega$ basis. Again, this definition does not fully
agree with the one given in \eqref{eq: dist of companys}, but it is
easy to see that the events $\{ i \sim j \}$ are identical under
both definitions. Segregated markets and the segregation set
$\Sigma$ are formulated exactly as in Definition \ref{dfn: dist of
sect con}.

We again have $\Omega_b \cap \{ \kappa^i_\infty
> 0, \ \kappa^j_\infty > 0 \} \subseteq \{ i \sim j \}$. The proof
follows the steps of Lemma \ref{lem: key for divers failure},
invoking Lemma \ref{lem: pos mart conv gen} (actually, Remark
\ref{rem: pos supermart conv}) from the Appendix. Then, Theorem
\ref{thm: one company takes all} follows trivially: on balanced
outcomes that segregation of companies holds, one company will take
all.

\appendix

\section{Limiting Behavior of Local Martingales}

The proof of the following result is well-known for continuous-path
semimartingales --- for the slightly more general case described
below, the proof is the same.

\begin{lem} \label{prop: strong law of large numbers cont mart}
Let $X = M + x*(\mu - \eta)$ be a local martingale, where $M$ is a
continuous local martingale and $\mu$ is the jump measure of $X$
with $\eta$ its predictable compensator. We assume that $X$ has
bounded jumps: $|\Delta X| \leq c$ for some constant $c \geq 0$.
Then, with $B := [M,M] + |x|^2*\eta$ we have $\{ \lim_{t \to \infty}
X_t \textrm{ exists in } \Real \} = \{ B_\infty < + \infty \}$,
while on the event $\{ B_\infty = + \infty \}$ we have $\lim_{t \to
\infty} (X_t / B_t) = 0$.
\end{lem}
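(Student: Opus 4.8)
The strategy is to reduce everything to two classical martingale-convergence facts and then handle the finite-variation and local-martingale parts of $X$ simultaneously, exploiting that $X$ is itself a local martingale with bounded jumps, so no Doob--Meyer-type decomposition trouble arises. First I would recall that, because $|\Delta X| \le c$, the process $B = [M,M] + |x|^2 * \eta$ is (up to a multiplicative constant) comparable to the predictable quadratic variation $\langle X, X\rangle$: indeed $[X,X] = [M,M] + x^2 * \mu$ and taking compensators gives $\langle X, X\rangle = [M,M] + |x|^2 * \eta = B$, using quasi-left-continuity so that there is no drift contribution from the jumps. Thus $B$ is exactly the conditional quadratic variation of the square-integrable-on-each-$[0,t]$ local martingale $X$.

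For the first assertion, on $\{B_\infty < \infty\}$ I would localize: let $\tau_n := \inf\{t : B_t \ge n\}$; then $X^{\tau_n}$ is an $L^2$-bounded martingale (its second moment is $\expec[B_{\tau_n}] \le n + c^2$, the $c^2$ absorbing the single possible overshooting jump), hence converges a.s. and in $L^2$. Since $B_\infty < \infty$ forces $\tau_n = \infty$ eventually, $\lim_{t\to\infty} X_t$ exists in $\Real$ on this event. Conversely, on $\{\lim_t X_t \text{ exists in }\Real\}$, the stopped processes $X^{\sigma_m}$ with $\sigma_m := \inf\{t : |X_t| \ge m\}$ are bounded martingales, so $\expec[\langle X^{\sigma_m}, X^{\sigma_m}\rangle_\infty] = \expec[(X^{\sigma_m}_\infty)^2] < \infty$, giving $B_{\sigma_m} < \infty$ a.s.; and since $X$ has a finite limit it is a.s. bounded, so $\sigma_m = \infty$ eventually and $B_\infty < \infty$.

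For the second assertion, on $\{B_\infty = \infty\}$ I would apply the Dambis--Dubins--Schwarz-type / Kronecker-lemma argument: for continuous-path martingales this is the classical strong law $X_t / \langle X\rangle_t \to 0$ on $\{\langle X\rangle_\infty = \infty\}$, proved by writing $X_t = \int_0^t \ud X_s$ and applying Kronecker's lemma to the convergent series $\sum \int (\text{increment of } B)^{-1} \ud X$ (which converges by the first part of the lemma applied to the time-changed martingale, whose bracket is a.s. finite since $\int_0^\infty (1+B_s)^{-2} \ud B_s < \infty$). The only adaptation needed for the jump case is that the ``integrand times $\ud X$'' object $\int_0^\cdot (1+B_{s})^{-1}\, \ud X_s$ is again a local martingale of the same type (continuous part plus compensated bounded-jump integral), so its bracket is $\int_0^\cdot (1+B_s)^{-2}\, \ud B_s \le 1 < \infty$, hence it converges a.s. by the first assertion; Kronecker's lemma then yields $X_t/B_t \to 0$. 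The main obstacle — though a mild one — is being careful that the predictability of $(1+B_{s-})^{-1}$ versus $(1+B_s)^{-1}$ does not matter because $B$ is continuous (quasi-left-continuity again), and that the overshoot of $B$ at the localizing times is uniformly controlled by $c^2$; neither causes real difficulty, which is precisely why the lemma's preamble asserts ``the proof is the same'' as in the continuous case.
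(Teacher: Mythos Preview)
Your argument is correct and is precisely the standard localisation-plus-Kronecker proof that the paper is alluding to when it declares the result ``well-known'' and says ``the proof is the same'' as in the continuous case --- the paper gives no proof of its own here. The one small wobble, bounding $\expec[B_{\tau_n}]$ by $n+c^2$ while later asserting that $B$ is continuous, is harmless: under the quasi-left-continuity that pervades the paper $B=[M,M]+|x|^2*\eta$ is indeed continuous, so $B_{\tau_n}\le n$ outright and the overshoot allowance is simply unnecessary.
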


This allows one to prove the following lemma.

\begin{lem} \label{lem: asymptotic pos loc mart, cont}
For a continuous local martingale  $M$, consider the exponential
local martingale $\Exp(M) = \exp(M - [M,M]/2)$. Then,
$\Exp(M)_\infty := \lim_{t \to \infty} \Exp(M)_t$ exists and is
$\Real_+$-valued. Further, $\{ [M,M]_\infty < + \infty \} = \{
\Exp(M)_\infty > 0 \}$.
\end{lem}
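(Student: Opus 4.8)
The plan is to handle the two assertions in turn, using Lemma~\ref{prop: strong law of large numbers cont mart} (specialized to continuous $M$, so that the process $B$ appearing there is just $[M,M]$) as the only nontrivial input.

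\textbf{The limit exists and is finite.} I would first note that $\Exp(M) = \exp(M - \tfrac12 [M,M])$ is a \emph{strictly positive} local martingale. A nonnegative local martingale is a supermartingale --- apply Fatou's lemma along a localizing sequence of stopping times --- so Doob's supermartingale convergence theorem applies and $\Exp(M)_t$ converges $\prob$-a.s. as $t \to \infty$ to an $\Real_+$-valued random variable $\Exp(M)_\infty$. This settles the first sentence of the statement.

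\textbf{The inclusion $\{[M,M]_\infty < \infty\} \subseteq \{\Exp(M)_\infty > 0\}$.} On this event, Lemma~\ref{prop: strong law of large numbers cont mart} (applied with $X = M$, whence $B = [M,M]$) gives that $M_\infty := \lim_{t\to\infty} M_t$ exists and is real-valued; since moreover $[M,M]_t \uparrow [M,M]_\infty < \infty$, continuity of the exponential yields $\Exp(M)_t \to \exp(M_\infty - \tfrac12 [M,M]_\infty) > 0$.

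\textbf{The reverse inclusion.} On $\{[M,M]_\infty = \infty\}$ the process $[M,M]_t$ is eventually strictly positive, and for such $t$ one has
\[
\log \Exp(M)_t \ = \ M_t - \tfrac12 [M,M]_t \ = \ [M,M]_t \Big( \frac{M_t}{[M,M]_t} - \frac12 \Big).
\]
By the second conclusion of Lemma~\ref{prop: strong law of large numbers cont mart}, $M_t / [M,M]_t \to 0$ on this event, so the parenthesis tends to $-\tfrac12$ while $[M,M]_t \to \infty$; hence $\log \Exp(M)_t \to -\infty$, i.e. $\Exp(M)_\infty = 0$. Combining the last two inclusions and passing to complements gives $\{\Exp(M)_\infty > 0\} = \{[M,M]_\infty < \infty\}$, as claimed. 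I do not anticipate a genuine obstacle here; the only points deserving a word of care are the passage from ``nonnegative local martingale'' to ``supermartingale'' before invoking the convergence theorem, and the remark that dividing by $[M,M]_t$ is legitimate because $[M,M]$ is continuous, nondecreasing, and eventually positive on $\{[M,M]_\infty = \infty\}$.
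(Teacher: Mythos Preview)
Your proof is correct and follows essentially the same approach as the paper's: supermartingale convergence for existence of the limit, Lemma~\ref{prop: strong law of large numbers cont mart} applied to $M$ for the inclusion $\{[M,M]_\infty<\infty\}\subseteq\{\Exp(M)_\infty>0\}$, and the same lemma's second clause to obtain $\log\Exp(M)_t/[M,M]_t\to -1/2$ on $\{[M,M]_\infty=\infty\}$ for the reverse inclusion. Your write-up is somewhat more explicit (e.g., spelling out the Fatou step and the legitimacy of dividing by $[M,M]_t$), but the structure and key inputs are identical.
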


\begin{proof}
Existence of $\Exp(M)_\infty$ follows from the supermartingale
convergence theorem. Lemma \ref{prop: strong law of large numbers
cont mart} gives $\{ [M,M]_\infty < + \infty \} = \{ \lim_{t \to
\infty} \Exp(M)_t \in \Real \}$; thus $\{ [M,M]_\infty < + \infty \}
\subseteq \{ \Exp(M)_\infty
> 0 \}$. For the other inclusion, Lemma \ref{prop: strong law of
large numbers cont mart} again gives that on $\{ [M,M]_\infty = +
\infty \}$ we have $\lim_{t \to \infty} (\log \Exp(M)_t / [M,M]_t) =
- 1/2$; this means that $\lim_{t \to \infty} \log \Exp(M)_t = -
\infty$, or $\Exp(M)_\infty = 0$ and we are done.
\end{proof}

In order to prove the equivalent of Lemma \ref{lem: asymptotic pos
loc mart, cont} for general semimartingales, a ``strong law of large
numbers'' result for increasing processes will be needed.

\begin{lem} \label{lem: slln for increasing}
Let $A$ be an increasing, right-continuous and adapted process with
$|\Delta A| \leq c$ for some constant $c > 0$, and let $\tilde{A}$
be its predictable compensator, so that $A - \tilde{A}$ is a local
martingale. Then, we have $\{ A_\infty < \infty \} =
\{\tilde{A}_\infty < \infty\}$ and on $\{\tilde{A}_\infty = \infty
\}$ we have $\lim_{t \to \infty} ( A_t / \tilde{A}_t ) = 1$.
\end{lem}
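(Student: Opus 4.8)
The plan is to deduce the statement from the strong law already established in Lemma~\ref{prop: strong law of large numbers cont mart}, applied to the compensated martingale $N := A - \tilde A$. First I would record the structure of $N$: it is a local martingale (by hypothesis) and of finite variation, being the difference of two increasing processes ($\tilde A$, as the compensator of the increasing process $A$, is itself increasing), hence it is purely discontinuous, so $N = x * (\mu - \eta)$ with $\mu$ the jump measure of $N$ and $\eta$ its predictable compensator. Moreover $\Delta\tilde A$ is the predictable projection of $\Delta A$, and since $0 \le \Delta A \le c$ we get $0 \le \Delta\tilde A \le c$, whence $|\Delta N| \le c$. Thus $N$ is a purely discontinuous local martingale with bounded jumps, and Lemma~\ref{prop: strong law of large numbers cont mart} applies to it (with vanishing continuous part), the associated increasing process being $B := \langle N\rangle = |x|^2 * \eta$.

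The key step, and the only genuinely delicate one, is the pathwise estimate $B \le 2c\,\tilde A$. I would argue at the level of measures: $[N,N] = |x|^2 * \mu = \sum_{s \le \cdot}(\Delta N_s)^2$ is a pure-jump increasing process whose jump at time $s$ is $(\Delta N_s)^2 \le c\,|\Delta N_s| \le c(\Delta A_s + \Delta\tilde A_s)$, using $|\Delta N| \le c$ and the positivity of the jumps of $A$ and $\tilde A$; hence $c(A + \tilde A) - [N,N]$ is increasing, i.e.\ $\ud[N,N] \le c\,\ud(A + \tilde A)$ as measures. Passing to predictable compensators (a linear operation preserving increasingness, hence preserving domination of the differential measures), and using that the compensator of $A$ is $\tilde A$ while the compensator of the predictable process $\tilde A$ is $\tilde A$ itself, one obtains $B = \widetilde{[N,N]} \le 2c\,\tilde A$. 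This is exactly where the boundedness of the jumps of $A$ is used essentially: without it the bound $(\Delta N)^2 \le c\,|\Delta N|$ is unavailable and $B$ need not be controlled by $\tilde A$.

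With this in hand the conclusion is a short case analysis. Lemma~\ref{prop: strong law of large numbers cont mart} gives that $\lim_{t\to\infty} N_t$ exists in $\Real$ on $\{B_\infty < \infty\}$, and $N_t / B_t \to 0$ on $\{B_\infty = \infty\}$. On $\{\tilde A_\infty < \infty\}$ we have $B_\infty \le 2c\,\tilde A_\infty < \infty$, so $N_t$, and therefore $A_t = \tilde A_t + N_t$, converges to a finite limit; in particular $A_\infty < \infty$. On $\{\tilde A_\infty = \infty\}$, either $B_\infty < \infty$, in which case $N_t$ has a finite limit and $A_t/\tilde A_t = 1 + N_t/\tilde A_t \to 1$ since $\tilde A_t \to \infty$; or $B_\infty = \infty$, in which case $|N_t|/\tilde A_t = (|N_t|/B_t)(B_t/\tilde A_t) \le 2c\,|N_t|/B_t \to 0$, so again $A_t/\tilde A_t \to 1$; in either subcase $A_\infty = \infty$. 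Combining the two cases yields $\{A_\infty < \infty\} = \{\tilde A_\infty < \infty\}$ together with $\lim_{t\to\infty}(A_t/\tilde A_t) = 1$ on $\{\tilde A_\infty = \infty\}$, which is the assertion.
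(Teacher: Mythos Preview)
Your proof is correct and follows essentially the same strategy as the paper's: apply Lemma~\ref{prop: strong law of large numbers cont mart} to $N = A - \tilde A$, establish a domination $B \le C\,\tilde A$, and then carry out the identical case analysis on $\{\tilde A_\infty < \infty\}$ versus $\{\tilde A_\infty = \infty\}$ (splitting the latter according to whether $B_\infty$ is finite). The only difference is cosmetic: the paper first reduces to the case where $A$ is pure-jump and quasi-left-continuous (subtracting the predictable part from both $A$ and $\tilde A$), so that $\tilde A$ becomes continuous, the jumps of $N$ coincide with those of $A$, and one reads off directly $B = |x|^2 * \eta \le c\,(x * \eta) = c\,\tilde A$; you avoid this reduction, bound $(\Delta N)^2$ by $c(\Delta A + \Delta\tilde A)$ and pass to compensators, paying the harmless price $C = 2c$.
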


\begin{proof}
It is easy to see that we can assume without loss of generality that
$A$ is pure-jump and quasi-left-continuous (if not, decompose $A$
into a part as described and another part that is predictable; this
second part can be subtracted from both $A$ and $\tilde{A}$). Let
$\eta$ be the predictable compensator of the jump measure of $A$;
observe then that $\tilde{A} = x * \eta$ and if $N := A -
\tilde{A}$, then $B := \widetilde{[N,N]} = \abs{x}^2 * \eta$. Since
$A$ has jumps bounded by $c$, it is clear that $B \leq c \tilde{A}$.

On $\{\tilde{A}_\infty < + \infty \}$ we have $B_\infty < + \infty$,
so that $N_\infty$ exists, and thus $A_\infty < + \infty$.

Now, work on $\{\tilde{A}_\infty = + \infty \}$. If $B_\infty < +
\infty$, $M_\infty$ exists and is real-valued, so obviously $\lim_{t
\to \infty} ( A_t - \tilde{A}_t )/\tilde{A}_t = 0$. If $B_\infty = +
\infty$, we have $\lim_{t \to \infty} ( A_t - \tilde{A}_t ) / B_t =
0$, so that also $\lim_{t \to \infty} ( A_t - \tilde{A}_t
)/\tilde{A}_t = 0$, and this completes the proof.
\end{proof}

\begin{lem} \label{lem: pos mart conv gen}
Let $X$ and $Y$ be local martingales with $\Delta X > -1$, $\Delta Y
> -1$ (then, $\Exp(X)$ and $\Exp(Y)$ are positive local
martingales). Write $X = M + x*(\mu - \eta)$ and $Y = N + y*(\mu -
\eta)$ with $M$ and $N$ being continuous local martingales, $\mu$
the 2-dimensional jump measure of $(X,Y)$ and $\eta$ its predictable
compensator. Then,
\begin{enumerate}
    \item $\{ \Exp(X)_\infty > 0 \} \ = \ \{ [M, M]_\infty/2
    + [1 \wedge \abs{\log(1+x)}^2] * \eta_\infty < + \infty \}$.
    \item $\{ \Exp(X)_\infty > 0, \ \Exp(Y)_\infty > 0 \} \ \subseteq \ \{ d^{X | Y} < +
    \infty \}$,  where we have set
    \[
    d^{X|Y} := \frac{1}{2}[M - N, M - N]_\infty
    + \Big[ 1 \wedge \Big| \log \Big( \frac{1+x}{1+y} \Big) \Big|^2 \Big] * \eta_\infty
    \]
\end{enumerate}
\end{lem}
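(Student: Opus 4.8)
The plan is to rewrite $\log\Exp(X)$ in a form to which the two strong-law lemmas of the Appendix apply. By hypothesis $\Exp(X)$ is a positive local martingale, hence a supermartingale, so $\Exp(X)_\infty:=\lim_t\Exp(X)_t$ exists and is $[0,\infty)$-valued, and $\{\Exp(X)_\infty>0\}$ is exactly the event on which $\log\Exp(X)$ has a finite limit (it always converges in $[-\infty,\infty)$). First I would record, via the Dol\'eans-Dade formula, that with $\psi(u):=u-\log(1+u)\ge0$ on $(-1,\infty)$ one has
\[
\log\Exp(X)\ =\ X-\tfrac12[M,M]-K,\qquad K:=\sum_{s\le\cdot}\psi(\Delta X_s),
\]
where $K$ is increasing and finite on finite intervals since $\sum_{s\le t}|\Delta X_s|^2\le[X,X]_t<\infty$. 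The whole argument then turns on reading off, from this identity, the convergence or divergence of $X$, of $[M,M]$, and of $K$.

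For $\{D_\infty<\infty\}\subseteq\{\Exp(X)_\infty>0\}$ I would argue as follows. On $\{D_\infty<\infty\}$ both $[M,M]_\infty<\infty$ and $[1\wedge|\log(1+x)|^2]*\eta_\infty<\infty$; using the elementary two-sided comparison $c(1\wedge u^2)\le1\wedge|\log(1+u)|^2\le C(1\wedge u^2)$ on $(-1,\infty)$ (check $u\to0$, $u\downarrow-1$, $u\to\infty$) this yields $[1\wedge|x|^2]*\eta_\infty<\infty$, and Lemma \ref{lem: slln for increasing}, applied to the increasing process $[1\wedge|x|^2]*\mu$ (jumps $\le1$), upgrades it to $[1\wedge|x|^2]*\mu_\infty<\infty$; the same lemma applied to $\mu(\Real_+\times\{|x|>1\})$ (whose compensator is $\le[1\wedge|x|^2]*\eta_\infty$) shows there are only finitely many jumps of modulus $>1$. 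Hence $[X,X]_\infty<\infty$, which forces $X$ to converge (split off the finitely many large jumps and apply Lemma \ref{prop: strong law of large numbers cont mart}); and, splitting the jumps of $K$ at level $\tfrac12$, the bound $0\le\psi(u)\le Cu^2$ for $|u|\le\tfrac12$ together with the finiteness of the number of jumps exceeding $\tfrac12$ gives $K_\infty<\infty$. Therefore $\log\Exp(X)=X-\tfrac12[M,M]-K$ converges to a finite limit, i.e. $\Exp(X)_\infty>0$.

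For the converse $\{\Exp(X)_\infty>0\}\subseteq\{D_\infty<\infty\}$ I would use that on $\{\Exp(X)_\infty>0\}$ the (right-continuous) path of $\log\Exp(X)$ is bounded, so, as $\tfrac12[M,M]+K\ge0$, the local martingale $X=\log\Exp(X)+\tfrac12[M,M]+K$ is bounded below there. The key auxiliary fact is that a local martingale with jumps $>-1$ that is bounded below on a set of positive measure must converge on that set: localising at $\tau_a:=\inf\{t:X_t<-a\}$ one has $X^{\tau_a}\ge-a-1$, a nonnegative (shifted) supermartingale, which converges, and on $\{\tau_a=\infty\}$ this is $X$ itself, so letting $a\to\infty$ covers $\{\inf_tX_t>-\infty\}\supseteq\{\Exp(X)_\infty>0\}$. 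Hence $X$ converges on $\{\Exp(X)_\infty>0\}$, so there $\tfrac12[M,M]_\infty+K_\infty=X_\infty-\log\Exp(X)_\infty<\infty$; in particular $[M,M]_\infty<\infty$. Using then the pointwise lower bound $\psi(u)\ge c'(1\wedge|\log(1+u)|^2)$ on $(-1,\infty)$ (again check $u\to0$, $u\downarrow-1$, $u\to\infty$) one gets $c'[1\wedge|\log(1+x)|^2]*\mu_\infty\le K_\infty<\infty$, which Lemma \ref{lem: slln for increasing} (on the increasing process $[1\wedge|\log(1+x)|^2]*\mu$, jumps $\le1$) turns into $[1\wedge|\log(1+x)|^2]*\eta_\infty<\infty$. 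Together with $[M,M]_\infty<\infty$ this is $D_\infty<\infty$, completing part (1).

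Part (2) then follows at once. On $\{\Exp(X)_\infty>0,\,\Exp(Y)_\infty>0\}$, part (1) applied to $X$ and to $Y$ gives $[M,M]_\infty+[1\wedge|\log(1+x)|^2]*\eta_\infty<\infty$ and $[N,N]_\infty+[1\wedge|\log(1+y)|^2]*\eta_\infty<\infty$; then $[M-N,M-N]_\infty\le2[M,M]_\infty+2[N,N]_\infty<\infty$ by Kunita-Watanabe, and writing $\log\frac{1+x}{1+y}=\log(1+x)-\log(1+y)$ and using $1\wedge(a-b)^2\le2(1\wedge a^2)+2(1\wedge b^2)$ pointwise under $\eta$ yields $[1\wedge|\log\frac{1+x}{1+y}|^2]*\eta_\infty\le2[1\wedge|\log(1+x)|^2]*\eta_\infty+2[1\wedge|\log(1+y)|^2]*\eta_\infty<\infty$, hence $d^{X|Y}<\infty$. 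I expect the main obstacle to be part (1): specifically, marshalling the two Appendix strong-law lemmas to pass between the $*\mu$ and $*\eta$ forms of the relevant bounded-jump increasing processes, handling the boundary-sensitive comparisons among $1\wedge|\log(1+u)|^2$, $1\wedge u^2$ and $\psi(u)$ near $u=-1$, $u=0$ and $u=\infty$, and establishing that a local martingale with jumps bounded below by $-1$ cannot drift to $+\infty$ on a set of positive probability.
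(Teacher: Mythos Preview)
Your argument is correct, and in several places cleaner than the paper's. The main differences are worth noting.

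For part~(1), the paper proceeds by writing down the canonical decomposition of the bounded-jump part of $\log\Exp(X)$ (after peeling off the finitely many large jumps) and then invoking Lemma~\ref{prop: strong law of large numbers cont mart} on that decomposition directly; the argument is somewhat terse and the direction $\{D_\infty<\infty\}\subseteq\{\Exp(X)_\infty>0\}$ is left essentially implicit. You instead treat the two inclusions separately and more explicitly. In the direction $\{\Exp(X)_\infty>0\}\subseteq\{D_\infty<\infty\}$ your key observation---that $\log\Exp(X)$ is pathwise bounded below on this event, hence so is $X=\log\Exp(X)+\tfrac12[M,M]+K$, and a local martingale with jumps $>-1$ that is bounded below is a supermartingale after localisation at $\tau_a$---is a nice shortcut that avoids the canonical-decomposition bookkeeping. (The boundedness-below of $\log\Exp(X)$ deserves one more line: a nonnegative c\`adl\`ag supermartingale that is strictly positive at infinity satisfies $\inf_t Z_t>0$ a.s., since a nonnegative supermartingale sticks at zero once it or its left limit reaches zero.) The two-sided comparison $c(1\wedge u^2)\le 1\wedge|\log(1+u)|^2\le C(1\wedge u^2)$ and the bound $\psi(u)\ge c'(1\wedge|\log(1+u)|^2)$ are elementary and correct; checking the three regimes $u\to 0$, $u\downarrow -1$, $u\to\infty$ is exactly the right way to see them.

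For part~(2) your proof is substantially simpler than the paper's. The paper introduces auxiliary sets $\Lambda$, $\Lambda_x$, $\Lambda_y$ and a function $f$ to control the cross term; you bypass all of this with the single pointwise inequality $1\wedge(a-b)^2\le 2(1\wedge a^2)+2(1\wedge b^2)$ applied to $a=\log(1+x)$, $b=\log(1+y)$. That inequality is easily checked by cases (whether $|a-b|\lessgtr 1$ and whether $|a|,|b|\lessgtr 1$), and it delivers the jump part of $d^{X|Y}$ in one line. This is a genuine simplification.
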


\begin{proof}
For (1), the definition of the stochastic exponential gives
\[
\log \Exp(X) = X - \frac{1}{2} [M, M] - [x - \log (1+x)] * \mu.
\]
Since $\Delta \log \Exp(X) = \log(1 + \Delta X)$, on $\{
\Exp(X)_\infty > 0 \} = \{\log \Exp(X)_\infty \in \Real \}$ we
should have $| \log(1 + \Delta X_t) | > 1$ for a finite
(path-dependent) number of $t \in \Real_+$ --- equivalently, we must
have that $\indic_{\{ |\log(1+x)| > 1\}} * \mu_\infty < +\infty$ and
then Lemma \ref{lem: slln for increasing} implies $\indic_{\{
|\log(1+x)|
> 1\}} * \eta_\infty < +\infty$. Now, if we subtract the
semimartingale $[\log(1+x) \indic_{\{ |\log(1+x)| > 1\}} ] *
\mu_\infty$ (which is actually only a finite sum) from $\log
\Exp(X)$, what remains is a semimartingale with bounded (by one)
jumps. The canonical representation of the semimartingale $\log
\Exp(X) - [\log(1+x) \indic_{\{ |\log(1+x)| > 1\}} ] * \mu$ into a
sum of a predictable finite-variation part (first two terms in
\eqref{eq: decomp of semimart minus jumps} below) and a local
martingale part (last two terms):
\begin{equation} \label{eq: decomp of semimart minus jumps}
- \frac{1}{2} [M, M] + [x - \log(1+x) \indic_{\{\abs{\log(1+x)} \leq
1\}}] * \eta + M + [\log(1+x) \indic_{\{\abs{\log(1+x)} \leq 1\}}] *
(\mu - \eta).
\end{equation}
This last semimartingale must have a real limit at infinity. Observe
that on $\{ [M,M]_\infty + [|\log(1+x)|^2 \indic_{\{ |\log(1+x)|
\leq 1 \}}] * \eta_\infty = + \infty \}$ this cannot happen, because
Lemma \ref{prop: strong law of large numbers cont mart} would give
that the limit at infinity of the ratio of (\ref{eq: decomp of
semimart minus jumps}) to its predictable finite variation part
would be equal to 1, which would imply that the semimartingale
(\ref{eq: decomp of semimart minus jumps}) does not have a limit.
This completes the proof of (1).

Let us proceed to (2); we work on $\{ \Exp(X)_\infty > 0, \
\Exp(Y)_\infty > 0 \}$. Part (1) of this lemma gives $[M-N,
M-N]_\infty \leq 2 [M,M]_\infty + 2 [N,N]_\infty < + \infty$. Now,
define
\[
\Lambda \ := \ \Big \{ (x,y) \in (-1, \infty)^2 \ | \ \Big| \log
\Big( \frac{1+x}{1+y} \Big) \Big| \leq 1 \Big \}
\]
as well as $\Lambda_x := \{(x,y) \in (-1, \infty)^2 \ | \ |\log
(1+x)| \leq 1/2 \}$ and $\Lambda_y := \{(x,y) \in (-1, \infty)^2 \ |
\ |\log (1+y)| \leq 1/2 \}$. With the prime ``$'$'' denoting the
complement of a set, we have $\Lambda' \subseteq \Lambda'_x \cup
\Lambda'_y$, so $\indic_{\Lambda'} * \eta_\infty < + \infty$ as
discussed before. We then only have to show that $[\indic_\Lambda
|\log ((1+x)/(1+y))|^2]
* \eta_\infty < + \infty$. Since we have that $[\indic_{\Lambda_x} |\log (1+x)|^2]
* \eta_\infty < + \infty$ and $[\indic_{\Lambda_y} |\log (1+y)|^2]
* \eta_\infty < + \infty$ holds from part (1) of this lemma,
we need only show that $|f|^2 * \eta_\infty < + \infty$, where
\[
f(x,y) := \indic_\Lambda \log \Big( \frac{1+x}{1+y} \Big) -
\indic_{\Lambda_x} \log (1+x) + \indic_{\Lambda_y} \log (1+y).
\]
It is clear from part (1) that $[\indic_{\Lambda'} |f|^2]*
\eta_\infty < + \infty$. Now, on $\Lambda \cap \Lambda_x \cap
\Lambda_y$ we have $f = 0$, while on $\Lambda \cap \Lambda'_x \cap
\Lambda'_y$ we have $|f| \leq 1$. For $(x, y) \in \Lambda \cap
\Lambda_x \cap \Lambda'_y$ we have $f(x,y) = - \log(1+y)$, which
(using the triangle inequality) cannot be more than $3/2$ in
absolute value. The similar thing holds on $\Lambda \cap \Lambda'_x
\cap \Lambda_y$, so finally $[\indic_\Lambda |f|^2]*\eta_\infty \leq
(3/2) [\indic_{\Lambda \cap (\Lambda_x \cap
\Lambda_y)'}]*\eta_\infty < + \infty$, which completes the proof.
\end{proof}

\begin{rem} \label{rem: pos supermart conv}
Lemma \ref{lem: pos mart conv gen} can be extended in the case where
$X$ and $Y$ are of the form $X = -A + M + x*(\mu - \eta)$ and $Y = -
B + N + y*(\mu - \eta)$, where $A$ and $B$ are \emph{increasing} and
\emph{continuous} adapted processes. In that case we have
\begin{enumerate}
    \item $\{ \Exp(X)_\infty > 0 \} \ = \ \{ A_\infty + [M,
    M]_\infty/2 + [1 \wedge \abs{\log(1+x)}^2] * \eta_\infty < + \infty \}$.
    \item $\{ \Exp(X)_\infty > 0, \ \Exp(Y)_\infty > 0 \} \ \subseteq \ \{ d^{X | Y} < +
    \infty \}$, where
    \[
    d^{X|Y} := \int_0^\infty \ud |A - B|_t + \frac{1}{2}[M - N, M - N]_\infty
    + \Big[ 1 \wedge \Big| \log \Big( \frac{1+x}{1+y} \Big) \Big|^2 \Big] * \eta_\infty
    \]
\end{enumerate}
We can extend the discussion further when $A$ or $B$ might explode
to $\infty$ in finite time, i.e., if the lifetimes $\zeta^X := \inf
\{ t \in \Real_+ | X_t = - \infty \}$ and $\zeta^Y := \inf \{ t \in
\Real_+ | Y_t = - \infty \}$ are finite, exactly as described in
subsection \ref{subsec: set-up in general case} of the main text.
\end{rem}

\bibliographystyle{amsplain}

\end{document}